\def\Z{{\mathbb{Z}}}      \def\R{{\RR}}
\def\RR{{\mathbb{R}}}
\newtheorem{theorem}{Theorem}
\newtheorem{remark}{Remark} 
\title{Korteweg-de-Vries and Fermi-Pasta-Ulam-Tsingou: \\ asymptotic integrability of quasi unidirectional waves}  
\author{
Matteo Gallone\footnote{Universit\`a degli Studi di Milano, Dipartimento di Matematica ``F. Enriques'', Via Saldini 50, 20133 Milano, Italy; {\tt matteo.gallone@unimi.it}}, 
Antonio Ponno\footnote{Universit\`a degli Studi di Padova, Dipartimento di Matematica ``T. Levi-Civita'', Via Trieste 63, 35121 Padova, Italy;
{\tt ponno@math.unipd.it}}
and Bob Rink\footnote{Vrije Universiteit Amsterdam, Department of Mathematics, De Boelelaan 1111, 
1081 HV Amsterdam,  The Netherlands; {\tt b.w.rink@vu.nl} }
}
\begin{document}  \hyphenation{boun-da-ry mo-no-dro-my sin-gu-la-ri-ty ma-ni-fold ma-ni-folds re-fe-rence se-cond se-ve-ral dia-go-na-lised con-ti-nuous thres-hold re-sul-ting fi-nite-di-men-sio-nal ap-proxi-ma-tion pro-per-ties ri-go-rous De-no-ting}
\maketitle
\noindent 
\abstract{In this paper we construct a higher order expansion of the manifold of quasi unidirectional waves in the Fermi-Pasta-Ulam-Tsingou (FPUT) chain. We also approximate the dynamics on this manifold. As  perturbation parameter we use $h^2=1/n^2$, where 
$n$ is the number of particles of the chain. It is well known that the dynamics of quasi unidirectional waves is described to first order by the Korteweg-de Vries (KdV) equation. Here we show that the dynamics to second order is governed by a combination of the first two nontrivial
equations in the KdV hierarchy -- for any choice of parameters in the FPUT potential. On the other hand, we find that only if the parameters of the FPUT potential satisfy a condition, then a combination of the first three nontrivial equations in the KdV hierarchy determines the dynamics of quasi unidirectional waves to third order. The required condition is satisfied by the Toda chain. Our results suggest why the close-to-integrable behavior of the FPUT chain (the FPUT paradox)
persists on a time scale longer than explained by the KdV approximation, and also how a breakdown of integrability (detachment from the KdV hierarchy) \textcolor{black}{may be} responsible for the eventual thermalization of the system.}
\\ \mbox{} \\ 
\noindent 
{\bf Keywords:} Fermi-Pasta-Ulam-Tsingou, Korteweg-de Vries, Near-integrability, Normal forms.

\section{Introduction}

In the early 1950s, Fermi, Pasta, Ulam and Tsingou (FPUT) set up a series of numerical experiments, with the purpose to measure the time of approach to  statistical equilibrium in non-integrable Hamiltonian systems. Their experiments were motivated by the conviction of Fermi that generic, large size Hamiltonian systems are ergodic. FPUT studied a simple family of models: chains of particles  interacting with their nearest neighbors through a force of simple polynomial type. The unexpected outcome of their  numerical study \cite{FPU55}, namely the observed lack of ergodicity and mixing over the computational time then available, was named after them -- the FPUT problem, or FPUT paradox. Although a complete mathematical understanding of the FPUT problem is still lacking, some deep insights into  specific features have been obtained. It is not the aim of  this paper to review the history of the FPUT problem; the interested reader is referred to the existing reviews \cite{BCMM, Chaos,Sped1,Gall, Sped2}. 

Nowadays, we know that the FPUT paradox is a matter of quasi-integrability. By this we mean that the dynamics of the FPUT  chain over short timescales resembles that of a closeby integrable system, while the approach to statistical equilibrium, in the long run, is due to the perturbation. A Hamiltonian dynamical system is said to be integrable if, when viewed in the appropriate coordinate system, its solutions execute trivial, straight line motion. 
 Which integrable system should be considered ``closest'' to the FPUT chain, depends both on the model and on initial conditions. In this paper, we focus on the generic FPUT chain, or FPUT $\alpha$-model, in which the inter-particle forces display quadratic nonlinearities to leading order -- see equation \eqref{eq:PotentialEnergy} below. For this model, the first explanation of the FPUT paradox in terms of integrability goes back to the pioneering work of Zabusky and Kruskal \cite{ZK}. These authors show that the dynamics in the chain can be described, for short times and for smooth low energy initial conditions close to a unidirectional wave, by the Korteweg-de Vries (KdV) equation. This partial differential equation was later proved to be integrable \cite{GGKM,L68, MGK68,ZF71}. The  approach in \cite{ZK} was later enforced by Zakharov \cite{Z74}, who proved the integrability of the quadratic Boussinesq equation, which can be regarded a continuum approximation of the full FPUT system. 
On the other hand, it is also well understood now that the low energy short term dynamics of the FPUT chain follows that of the integrable Toda chain, for \emph{any} initial condition (even extracted in measure), as first pointed out in \cite{FFML, M74} and more recently in 
\cite{BCP13,BPP18, BP11,BP20, GMMP20}. Of course, when smooth initial data are considered, the two points of view connect, since the continuum limit of the Toda chain consists of two KdV equations \cite{BKP}. 

It was shown in recent studies \cite{BP06,BLP09, BP05} that the approximate description of the FPUT dynamics in terms of PDEs can be cast within the theory of normal forms. In particular it was proved that, in a certain regime of smooth low energy initial conditions, the leading order resonant normal form of the FPUT system consists of two KdV equations -- one describing almost-right traveling waves and the other almost-left traveling waves \cite{BP06}. This result agrees with the perspective sketched above: the dynamics of the FPUT chain is integrable in the short term. On the other hand, the numerically observed FPUT energy spectrum turns out to be stable over time scales much longer than that of the validity of the KdV approximation. This in turn suggests that the normal form of the FPUT system may actually be integrable even beyond the leading order KdV approximation.

We partially investigate this issue in the present paper. In fact, we will consider the FPUT chain with periodic boundary conditions, and we will introduce smooth functions interpolating the positions of the particles at any time. As small parameter we choose $h^2:=1/n^2$, where $n$ is the number of particles in the chain. 
In a forthcoming paper \cite{GPR2} we will consider this problem for arbitrary smooth initial data. In this paper, on the other hand, we shall investigate solutions that lie inside an invariant sub-manifold consisting of quasi unidirectional waves. We construct this invariant manifold to high perturbative orders. This method is among the standard 
techniques for finding invariant manifolds in hydrodynamics 
\cite{temam}, and was first introduced by 
Whitham \cite{whit}, and extended e.g. in \cite{dull}, to derive the KdV equation in the theory of shallow water waves.

The evolution equation that we initially find for the dynamics of quasi unidirectional waves, is not integrable beyond the KdV equation found at order $h^2$. To investigate the asymptotic integrability of the equation in more detail, we shall apply normal form transformations. The type of transformations that we consider were introduced by Hiraoka and Kodama \cite{HirKod}; in this paper we adapt them for continuous systems with periodic boundary conditions. Our results are the following. First of all, we find that to order $h^{4}$
 the dynamics of quasi unidirectional waves is \emph{always} governed by an integrable equation from the KdV hierarchy. In contrast, to order $h^6$ the dynamics is governed by a member of the KdV hierarchy only for particular values of the parameters   defining the nearest-neighbor interaction in the chain. The Toda chain is an example of such a system that to order $h^6$ possesses an integrable normal form. Thus,
\begin{center}
\emph{the dynamics of quasi unidirectional waves in generic FPUT chains is \\ integrable over the timescale corresponding to the second order normal form;\\ the breakdown of integrability generally takes place at third order}.
\end{center}
This agrees, qualitatively, with what is observed in numerical simulations. We also point out the recent work \cite{Bam20}, in which a somewhat similar result was proven in the theory of shallow water waves.

The remainder of this paper is organised as follows. In Section \ref{sec:formulation} we introduce the FPUT chain, as well as an exact continuous system interpolating it. In this section we also informally present the main result of this paper as Theorem \ref{thm:InformalMainResult}. In Section  \ref{sec:LongWaves} we provide higher order asymptotic expansions for the interpolating system, in the form of partial differential equations (PDEs). In Section \ref{sec:unidwaves} we construct and   approximate the manifold of quasi unidirectional  waves, and provide an asymptotic expansion of the dynamics on this manifold, see  Theorem \ref{thm:F}.
In Section \ref{sec:normform} we prove Theorem \ref{th:Main}, which provides the normal form for the dynamics on the manifold of quasi unidirectional waves. Theorem \ref{th:Main} is a direct consequence of Theorem \ref{thm:Kodama}, which is a spatially periodic version of a theorem by Hiraoka and Kodama. We prove Theorem \ref{thm:Kodama} in Section 
\ref{sec:KodamaProof}.

\section{Formulation of the main result} 
\label{sec:formulation}

In this section we give a brief description of the main result of this paper. 
We start by recalling that the periodic FPUT chain with $n$ particles  is the $n$-degrees of freedom Hamiltonian dynamical system with equations of motion
\begin{align} \label{eq:FPU-Lattice}
& \left\{ \begin{array}{ll}
\frac{d q_j}{dt} \;=&  p_j\, , \\ \frac{d p_j}{dt} \;=& W'(q_{j+1}-q_j) - W'(q_j-q_{j-1})\, ,
\end{array} \right. 
\end{align}
satisfying the periodic boundary conditions 
$$q_{j+n}=q_j,\ p_{j+n}=p_j \ \mbox{for all}\ j \in \Z\, .$$ The function 
$W$ in (\ref{eq:FPU-Lattice}) is a potential energy determining the interaction between neighboring particles in the chain. As usual, we assume that it admits the Taylor expansion
\begin{equation}\label{eq:PotentialEnergy}
W(z)=\frac{1}{2}z^2 + \frac{\alpha}{3}z^3+\frac{\beta}{4}z^4+\frac{\gamma}{5}z^5+ \mathcal O(z^6)\ . 
\end{equation}
The non-linearity in the  forces between neighboring particles is thus determined by the parameters $\alpha,\beta,\gamma$, etc.
We assume $\alpha \neq 0$ throughout this paper. As discussed above, a special role is played by the integrable Toda chain, \textcolor{black}{of which the potential energy is given by
\begin{equation}
\label{eq:TodaPotentialEnergy}
W(z) = W_{{\rm T}}(z) = \frac{1}{2} z^2 + \frac{\alpha}{3} z^3 + \frac{\alpha^2}{6}z^4 + \frac{\alpha^3}{15} z^5 +  \ldots\ ,
\end{equation}
(see Remark \ref{rem:Toda}) corresponding to a specific one-parameter family of potentials for which
\begin{equation}
\label{eq:Todapars}
\beta= \beta_{T} := \frac{2}{3} \alpha^2\, ,\ \ 
\gamma= \gamma_T:=\frac{1}{3} \alpha^3\, ,\ \mbox{etc.}
\end{equation}\noindent 
}
\noindent To transform the FPUT equations of motion (\ref{eq:FPU-Lattice}) into a continuous system of PDEs, we first replace (\ref{eq:FPU-Lattice}) by an exact evolution equation for an interpolating profile. To this end, 
let us denote $h:=1/n$ and write $\mathbb{T}:=\mathbb{R}/\mathbb{Z}$. Now consider a pair of smooth scalar functions  
$$(u, v)=(u,v)(x,t): \mathbb{T} \times \mathbb{R} \to \mathbb{R}^2$$ and assume that these functions satisfy the evolution equations
\begin{equation}
\label{eq:FPU-Continuous}
\left\{
\begin{array}{l}
u_t(x,t) =  v(x,t) \, ,\\
 v_t(x,t) =  h^{-3}\left[W'\left(h u(x + h, t)-h u(x,t)\right)- 
 W'\left(h u(x,t)-h u(x-h,t)  \right)\right]\, .
 \end{array}\right.
\end{equation}
We informally think of \eqref{eq:FPU-Continuous} as an ODE on the space $C^{\infty}(\mathbb T, \mathbb R^2)$ of pairs of smooth scalar functions on $\mathbb T$. We correspondingly think of solutions to \eqref{eq:FPU-Continuous} as curves $t\mapsto (u,v)(\cdot)(t)$ in $C^{\infty}(\mathbb T, \mathbb R^2)$. 

Given any solution of (\ref{eq:FPU-Continuous}), one can define, for $j\in \mathbb Z$, the functions 
\begin{equation} 
\label{eq:interpolation}
\left\{
\begin{array}{l} q_j(t):= h\, u(h j,h t) \, , \\ 
p_j(t) := h^2\, v(h j,h t)\, .
\end{array}\right.
\end{equation}
It is not hard to check that these $q_j(t), p_j(t)$ then form solutions of the FPUT equations of motion (\ref{eq:FPU-Lattice}). This motivates us to study \eqref{eq:FPU-Continuous} instead of (\ref{eq:FPU-Lattice}), the advantage being that \eqref{eq:FPU-Continuous} is defined on the same phase space for each value of $h$; in fact, we may simply think of $h$ as a continuous small parameter.
As long as $u$ is a smooth function of $x$, we can Taylor expand $u(x\pm h) = u(x)\pm hu_x(x) + \frac{h^2}{2} u_{xx}(x) + \ldots$, and write \eqref{eq:FPU-Continuous} as the perturbed Boussinesq equation

\begin{equation}
\label{eq:FPU-Continuousexpanded11}
\left\{
\begin{array}{l}
u_t =  v \, ,\\
 v_t =  u_{xx} + h^{2} \left( \frac{1}{12}  u_{xxxx}  + 2 \alpha  u_x  u_{xx} \right) + \mathcal{O}(h^{4})  \, .
 \end{array}\right.
\end{equation}
\textcolor{black}{
%\begin{remark}
Here we have (with abuse of notation) denoted by $u_t$ and $v_t$ the derivatives of $u$ and $v$ with respect to their second argument. In other words, we have rescaled time $t\mapsto h t$. 
%\end{remark}
}
Next, instead of studying \eqref{eq:FPU-Continuous} or  \eqref{eq:FPU-Continuousexpanded11} directly, we  introduce a change of variables $(u,v)\mapsto(U,V)$ in $C^{\infty}(\mathbb T, \mathbb R^2)$, that 
maps the functions $(u, v)(x)$ to   
the  \emph{discrete Riemann invariants} $(U,V)(x)$ defined by
\begin{equation} 
\label{eq:Riemanninv}
U := 2\alpha(D_{h}u + v) \, ,  \ 
 V := 2\alpha(D_{h}u - v)\ .
\end{equation}
Here $D_h$ denotes the finite difference operator 
\begin{equation}
\label{eq:Dhu}
(D_{h}u)(x):=\frac{u(x+h/2)-u(x-h/2)}{h}\ .
\end{equation}
This $D_h$ is a discrete approximation of the derivative $\partial_x$, because $D_hu = u_x + \frac{h^2}{24} u_{3x} + \mathcal O(h^5)$ if $u: \mathbb T \to \mathbb R$ is sufficiently smooth - again by Taylor's theorem. 
The change of variables $(u,v)\mapsto (U,V)$ transforms the continuum equations
(\ref{eq:FPU-Continuous}) into a system of the form
\begin{equation}
\label{eq:exactriemanninv1}
\left\{
\begin{array}{l}
U_t =  \mathscr{F}(U,V,h)\, ,     \\ 
V_t = -\mathscr{F}(V,U,h)\, .
\end{array}\right.
\end{equation}
 We again think of \eqref{eq:exactriemanninv1} as an ODE on $C^{\infty}(\mathbb T, \mathbb R^2)$. An exact expression for $\mathscr{F}(U,V,h)$ will be given in Section \ref{sec:LongWaves}. Here it suffices to know that $\mathscr{F}(U,V,h)$ admits an expansion
$$\mathscr{F}(U,V,h) =   U_x +  \frac{h^{2}}{24}\left( U_{xxx} + 6 UU_x + 6UV_x + 6U_xV + 6VV_x \right)  + \mathcal O(h^{4})\, ,$$
as long as $(U,V)$ are sufficiently smooth functions of $x$. 

By setting $V\equiv 0$ in the evolution equation for $U$ in \eqref{eq:exactriemanninv1} we recover the KdV equation 
$$U_t = \mathscr{F}(U,0,h) = U_x + \frac{h^{2}}{24}\left(  U_{xxx} + 6 U U_x \right) + \mathcal O (h^4)\, .$$ 
Nevertheless, it should be clear that if $V(x) \equiv 0$ while $U(x)\not \equiv 0$, then 
$$V_t  = - \mathscr{F}(V, U ,h) = -\mathscr{F}(0, U ,h) = -\frac{1}{4} h^2 UU_x \not \equiv 0\, .$$ 
This means that, already to order $h^2$, the subspace 
$$\{(U,V) \in C^{\infty}(\mathbb{T}, \mathbb R^2) \, | \, V\equiv 0\} \subset C^{\infty}(\mathbb{T}, \mathbb R^2)$$ 
of ``unidirectional waves'' is not invariant under the flow of  \eqref{eq:exactriemanninv1}. 
This may cast some doubt on the validity of the above derivation of the KdV approximation, but we will show that nearby the subspace of unidirectional waves one can formally construct a submanifold of quasi unidirectional waves of the form
$$ \{(U,V) \in C^{\infty}(\mathbb{T}, \mathbb R^2) \, | \, V = c(U,h) = \mathcal{O}(h^2)  \} \subset C^{\infty}(\mathbb{T}, \mathbb R^2)\, .$$ 
This submanifold is invariant under \eqref{eq:exactriemanninv1} to high asymptotic orders. 
Moreover, the dynamics on the manifold of quasi unidirectional waves
 is indeed governed to order $h^2$ by the KdV equation. 
 
 More interestingly, we shall also derive higher order equations for the evolution 
of quasi unidirectional waves. It turns out that, after a suitable change of variables, this evolution is determined to a large extent by the ``higher order versions'' of the KdV equation, i.e., by the integrable KdV hierarchy. 
More precisely, the main results of this paper are Theorems \ref{thm:F} and \ref{th:Main}, which can be summarised as follows. 
\begin{theorem}
\label{thm:InformalMainResult}
Inside $C^{\infty}(\mathbb{T}, \mathbb{R}^2)$ there is a formal invariant manifold for the  dynamics \eqref{eq:exactriemanninv1} of the discrete Riemann invariants, consisting of quasi unidirectional waves, and defined by a slaving relation  of the form 
$$V = c(U,h) = h^2 c_2(U) + h^4 c_4(U) + \mathcal{O}(h^6) \, . $$
There also exists a formal near-identity transformation in $C^{\infty}(\mathbb{T}, \mathbb{R})$ of the form
$$U\mapsto U+h^2\widetilde G_2(U) + h^4\widetilde G_4(U)+ h^{6}\widetilde G_6(U,t) + \mathcal{O}(h^8)\, ,$$ 
bringing the dynamics on the manifold of quasi unidirectional waves into the form
\begin{align}
\label{eq:MainResultIntroduction}
\begin{split}
 U_t  &  = \hspace{5mm}  \mathcal{C}_1 (U,h)\ K_1(U) \, + \\ 
\ &  {}^{}+ h^{2}\ \mathcal{C}_3(U,h) \ K_3(U)\, + \\
\ &  {}^{}+ h^{4}\  \mathcal{C}_5(U,h)\ K_5(U)\, + \\
\ &  {}^{}+ h^{6}\  \mathcal{C}_7(U, h) \left[K_7(U)+R(U)\right] + \mathcal O(h^{8})\, .
\end{split}
\end{align}
Here, $K_1(U), K_3(U), K_5(U), K_7(U)$ are the first four commuting vector fields in the KdV
hierarchy, and the scalars $\mathcal{C}_1(U,h),\mathcal{C}_3(U,h), \mathcal{C}_5(U,h), \mathcal{C}_7(U,h)$ are constants of motion of the KdV hierarchy (see also Remark \ref{rmk:hierarchy}).
The term $R(U)$ can be chosen equal to zero in case 
\begin{equation}
\label{eq:condmainres}
14\alpha^3  - 27\beta \alpha + 12\gamma =0\ .
\end{equation} 
\end{theorem}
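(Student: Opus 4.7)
My plan has two parts, mirroring the two claims of the theorem: first the construction of the slaving relation $V = c(U,h)$, and then the normal-form reduction of the resulting equation on the invariant manifold.

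For the manifold, I would impose that $V = c(U,h)$ be preserved under the flow of \eqref{eq:exactriemanninv1}. Differentiating $V=c(U,h)$ along the flow and equating with $V_t = -\mathscr{F}(V,U,h)$ gives the functional equation
\[
Dc(U)\cdot \mathscr{F}(U, c(U,h), h) \;=\; -\mathscr{F}(c(U,h), U, h),
\]
which I would solve order by order in $h^2$ using the expansion of $\mathscr{F}$ provided in Section \ref{sec:LongWaves}. At the leading order $h^2$, the chain rule $Dc_2(U)\cdot U_x = (c_2(U))_x$ collapses the equation to $2(c_2)_x = -\tfrac{1}{4}UU_x$, yielding $c_2(U) = -\tfrac{1}{16}U^2$. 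The same procedure determines $c_4, c_6, \ldots$ uniquely modulo $x$-independent constants: at each order the linear part acting on $c_{2k}$ is $2\partial_x$ and the right-hand side turns out to be an $x$-derivative, so one simply integrates. Inserting $V = c(U,h)$ into the first line of \eqref{eq:exactriemanninv1} yields the autonomous reduced equation $U_t = \mathscr{F}(U, c(U,h), h)$ on the manifold, which is the content of Theorem \ref{thm:F}.

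Next I would normalise this reduced equation using the spatially periodic Hiraoka--Kodama theorem (Theorem \ref{thm:Kodama}). The leading part is translation $U_x = K_1(U)$; after combining the contribution of $c_2$ with the direct $h^2$-term of $\mathscr{F}$, the $h^2$-correction is a multiple of the KdV vector field $K_3$. At each subsequent order $h^{2k}$ one finds a distinguished multiple of $K_{2k+1}$ together with further differential-polynomial terms that do not a priori commute with $K_1$. I would eliminate the latter by near-identity transformations
\[
U \mapsto U + h^2\widetilde G_2(U) + h^4\widetilde G_4(U) + h^6\widetilde G_6(U,t) + \mathcal O(h^8),
\]
solving at each step the cohomological equation $[K_1, \widetilde G_{2k}] = (\text{unwanted part})$. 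Theorem \ref{thm:Kodama} describes the image of $\mathrm{ad}_{K_1}$ on local differential polynomials and guarantees that, up to scalar prefactors $\mathcal C_{2k+1}$ which are constants of motion for the whole KdV hierarchy, all unwanted terms at orders $h^2$ and $h^4$ do lie in this image.

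The main obstacle is the explicit computation at order $h^6$. A systematic tracking of the contributions from $\mathscr{F}$ (expanded to order $h^6$), from the slaving functions $c_2, c_4, c_6$, and from the previously chosen transformations $\widetilde G_2, \widetilde G_4$, produces at this order a differential polynomial that decomposes into $\mathcal C_7 K_7(U)$ plus a residual $R(U)$. The $t$-dependence allowed in $\widetilde G_6(U,t)$ lets us absorb everything except $R(U)$, which is therefore the genuine obstruction to normalising within the KdV hierarchy. I would compute the coefficient of $R(U)$ explicitly as a cubic in $\alpha, \beta, \gamma$ and show that it is proportional to $14\alpha^3 - 27\beta\alpha + 12\gamma$, so that $R(U)$ vanishes identically precisely under \eqref{eq:condmainres}. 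As a sanity check, the Toda parameters \eqref{eq:Todapars} give $14\alpha^3 - 18\alpha^3 + 4\alpha^3 = 0$, consistent with the known integrability of the Toda chain.
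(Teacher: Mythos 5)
Your construction of the invariant manifold is essentially the paper's: you impose the invariance equation $c'(U)\,\mathscr{F}(U,c,h)=-\mathscr{F}(c,U,h)$, use that $c'(U)U_x=(c(U))_x$ for differential polynomials to reduce each order to $2\partial_x c_{2k}=(\text{total }x\text{-derivative})$, and integrate; your $c_2=-\tfrac{1}{16}U^2$ agrees with \eqref{eq:c2sol} up to the zero-average integration constant. (A minor point: $c_6$ is not needed for the order-$h^6$ dynamics, since $V$ enters $\mathscr{F}$ only from order $h^2$ onwards, so $c_6$ first affects order $h^8$.)

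The normal-form half contains a genuine error. You propose to remove the unwanted terms at order $h^{2k}$ by solving the cohomological equation $[K_1,\widetilde G_{2k}]=(\text{unwanted part})$, and you describe Theorem \ref{thm:Kodama} as characterising the image of $\mathrm{ad}_{K_1}$ on differential polynomials. But $K_1(U)=U_x$ generates translations, and for any translation-invariant generator $G$ (a differential polynomial in $U$, possibly with averages and zero-average primitives --- exactly the class \eqref{eq:DefG2}--\eqref{eq:DefG4}) one has $[K_1,G](U)=(G(U))_x-G'(U)U_x=0$. Hence $\mathrm{ad}_{K_1}$ vanishes identically on the relevant space: every term in \eqref{eq:Fu} is resonant with respect to $K_1$, and the equation you write admits no solution unless the unwanted part is already zero. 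What the paper (following Hiraoka--Kodama) actually does is choose $G_{2k}$ \emph{commuting} with $K_1$, so that the transformation leaves orders up to $h^{2k}$ untouched, and its first nontrivial effect is the bracket with the KdV field, $h^{2k+2}[G_{2k},\mathrm{F}_3]$. The homological operator being inverted is therefore $\mathrm{ad}_{K_3}$ (restricted to the kernel of $\mathrm{ad}_{K_1}$), not $\mathrm{ad}_{K_1}$, and its failure to be surjective at order $h^6$ is precisely where the obstruction $r(\mathbf{A},\mathbf{B},\mathbf{C})$, and hence the condition $14\alpha^3-27\alpha\beta+12\gamma=0$, comes from. Relatedly, the time-dependent $\widetilde G_6(U,t)$ does not ``absorb everything except $R(U)$'': it only removes the spatially constant term $\lambda_7\langle (U_x)^3\rangle$, while the remaining order-$h^6$ terms are removed through $[G_4,\mathrm{F}_3]$ together with the second-order effect $\tfrac12[G_2,\mathrm{F}_5+\mathrm{N}_5]$ of the first transformation. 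Your sanity check on the Toda parameters is correct.
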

\noindent 
Theorem \ref{thm:InformalMainResult} shows that the evolution of quasi unidirectional waves is always integrable to order $h^4$, and is integrable to order $h^6$ if the relation \eqref{eq:condmainres} holds. This relation holds in particular when $\beta = \beta_T =2 \alpha^2/3$ and $\gamma = \gamma_T = \alpha^3/3$. The one-parameter family of Toda chains is thus part of a co-dimension one family of FPUT chains whose dynamics is ``more integrable'' than that of generic FPUT chains.
\begin{remark}\label{rmk:hierarchy}
 The first four commuting vector fields in the KdV hierarchy are given by 
\begin{equation}
\label{eq:KdV-hierarchy}
\begin{split}
K_1(U) \; &= \; U_x \, , \\
K_3(U) \; &= \;U_{3x}+6\, UU_x  \, , \\
K_5(U) \; &= \; U_{5x} + 20\, U_xU_{2x} +  10\, UU_{3x} + 30\, U^2U_x \, , \\
K_7(U) \; &= \; U_{7x} + 70\, U_{2x}U_{3x} + 42\, U_xU_{4x} + 14\, UU_{5x}+  \\ 
&\qquad \!\!\!\!+ 70\, U_x^3 + 280\, UU_xU_{2x} + 70\, U^2U_{3x} + 
140\, U^3U_x\ .
\end{split}
\end{equation}
See for instance \cite{MGK68}.  
 We used the short-hand notation
$U_{mx} := \partial_{mx}U =\partial_x^m U$ for the $m$-derivative of $U$. 

The constants of motion $\mathcal C_1(U, h), \mathcal C_3(U, h), \mathcal C_5(U, h), \mathcal C_7(U, h)$ in Theorem \ref{th:Main} turn out to be functions of the first three integrals of the KdV hierarchy,
\begin{equation}
\label{eq:KdV-integrals}
  \int_{\mathbb{T}} U \, dx\, , \
  \int_{\mathbb{T}} U^2\, dx \ \mbox{and}  \ 
  \int_{\mathbb{T}}  U_x^2 -2 U^3 \, dx  \, .
\end{equation}
\end{remark}

\begin{remark}\label{re:energy}
If \eqref{eq:interpolation} holds for smooth functions $u, v:\mathbb T \to \mathbb R$,  then the total energy 
$$E = \sum_{j=1}^n \frac{1}{2} {p_j^2 } + W(q_{j+1} - q_j) $$
of the FPUT system \eqref{eq:FPU-Lattice} satisfies
$$E = h^3 \int_{\mathbb{T}} \frac{1}{2} v(x)^2 + \frac{1}{2} u_x(x)^2 \, dx + \mathcal{O}(h^4) \, . $$
In turn this implies that the ``specific energy'' (the energy per particle) $\frac{1}{n}E$ is of the order $h^4 = 1/n^4$. This is far from the thermodynamic limit in which $\frac{1}{n}E$  would be of order $1$. However, \eqref{eq:interpolation} is the unique scaling limit in which dispersion and nonlinearity are of the same order,  thus leading at lowest order to a KdV equation for the evolution of quasi unidirectional waves. It is also the scaling limit in which the Toda chain can be smoothly connected to the KdV equation \cite{BKP} and topologically to the harmonic chain \cite{BM}. 
\end{remark}
{\color{black}
\begin{remark}\label{timescaleremark}
Theorem \ref{thm:InformalMainResult} states that the evolution equation for unidirectional waves in the FPUT-chain is asymptotically integrable up to and including order $h^5$. This suggests that non-integrable effects can only be observed over timescales of the order at least $1/h^6$ or, when the time rescaling 
$t\to ht$ discussed directly below formula (\ref{eq:FPU-Continuousexpanded11}) is accounted for, timescales of the order $1/h^7$. This leads to the conjecture that 
 the timescale of  thermalization of the FPUT-chain must be at least $1/h^7 \sim (E/n)^{-7/4}$. This is indeed shorter than, and thus compatible with, the thermalization timescale $(E/n)^{-9/4}$ that was numerically found for generic FPUT-chains in \cite{BP11}.  Theorem \ref{thm:InformalMainResult}  also suggests that the thermalization timescale is considerably longer for FPUT-chains satisfying $14\alpha^3 - 27\beta \alpha + 12 \gamma = 0$. However, we do not know of any rigorous way to employ Theorem \ref{thm:InformalMainResult} to prove such lower bounds for the thermalization timescale. 
\end{remark}
}
 
{\color{black}
\begin{remark}
The authors of \cite{HK} and \cite{Bob} compute and analyse the Birkhoff-Gustavson normal form of the finite FPUT-chain. This can be done, for example, by applying in \eqref{eq:FPU-Lattice} the rescaling 
\begin{equation} \nonumber 
  q_j(t):= \epsilon\, u_j (t) \ \mbox{and} \
p_j(t) := \epsilon\, v_j (t)\, \ \mbox{for} \ 1 \leq j \leq n\ ,
\end{equation}
while leaving time unaffected. The small parameter $\epsilon$ is here assumed independent of the number of particles $n$, and hence thought of as a fixed number. Under this rescaling, we have 
$E = \mathcal{O}(\epsilon^2)$. The resulting evolution equations are  of the form 
$$\frac{d}{dt}(u,v) = F_1(u,v) + \epsilon F_2(u,v) + \epsilon^2 F_3(u,v) +  \mathcal{O}(\epsilon^3)\ \mbox{for} \ (u,v)\in \mathbb{R}^{2n}\, .$$
Here, $F_1$ denotes the linear part of the equations, $F_2$ the quadratic part of the equations before rescaling, $F_3$ the cubic part of the equations before rescaling, etc. It was shown in \cite{HK, Bob} that, for every finite $n$, appropriate symplectic transformations can bring these equations  into a normal form $$\frac{d}{dt}(u,v) = F_1(u,v) + 
\epsilon^2\, \overline{F}_3(u,v) + \mathcal{O}(\epsilon^3)\ ,$$
where $\overline{F}_3$ is a vector field of polynomial degree three. 
It was proved by the same authors that this normal form is integrable to order $\epsilon^2$, i.e., if the $\mathcal{O}(\epsilon^3)$ term is ignored. By the same argument as in Remark \ref{timescaleremark}, this  suggests that the thermalization time is at least 
$\epsilon^{-3} \sim E^{-3/2}$. 

For finite chains with $n=16, 32$ and $64$ particles, this result was improved in \cite{On15}, using the so-called weak turbulence formalism, to a near-integrability result of order $\epsilon^3$, and hence an estimated thermalisation time of at least $\epsilon^{-4} \sim E^{-2}$. 
A statistical argument was then used to explain the numerical evidence of a much longer thermalization time, namely $\epsilon^{-8}\sim E^{-4}$.

However, all these results strongly rely on the assumption that $n$ is finite, as their proofs exploit the nonresonance between the eigenvalues of the finite FPUT-chain \eqref{eq:FPU-Lattice}. The domain of validity of the normal form / weak turbulence approximation may shrink dramatically as $n$ grows. This makes it highly nontrivial to draw conclusions from the results in \cite{HK, On15, Bob} for FPUT-chains with $n$ growing to infinity. It also makes it difficult to compare the results from \cite{HK, On15, Bob} with the results of the present paper.  
\end{remark}
}
 
\section{PDE approximations of the FPUT dynamics}
\label{sec:LongWaves}

In this section we provide more details on the derivation of the evolution equations \eqref{eq:exactriemanninv1} for the discrete Riemann invariants $(U,V)$ defined in \eqref{eq:Riemanninv}. Our starting point is the evolution equation for the interpolating profiles $(u,v)$ given in \eqref{eq:FPU-Continuous}:
\begin{equation}\nonumber 
\left\{
\begin{array}{l}
u_t(x,t) =  v(x,t)\, , \\
 v_t(x,t) =  h^{-3}\left[W'\left(h u(x + h, t)-h u(x,t)\right)- 
 W'\left(h u(x,t)-h u(x-h,t)  \right)\right] \, .
 \end{array}\right.
\end{equation}
Assuming that $u(x,t)$ and $v(x,t)$ are smooth functions of $x$, one may Taylor expand $u(x\pm h, t)$ with respect to $h$ in the right hand side of (\ref{eq:FPU-Continuous}). This  yields a perturbed nonlinear wave equation which we present  here to high order:
\begin{equation}
\label{eq:FPU-Continuousexpanded}
\left\{ \begin{split}
u_t & = v \, , \\ 
v_{t} & =  u_{xx} + h^{2} \left( \frac{1}{12}  u_{4x}  + 2 \alpha  u_x  u_{xx} \right) + h^{4} \left(\frac{1}{360}  u_{6x} + \frac{\alpha}{3}  u_{xx}  u_{3x} + \frac{\alpha}{6}  u_x  u_{4x} +  3 \beta  u_x^2  u_{xx}\right)
\\
& +   h^{6}\left( \frac{1}{20160}u_{8x} +  \frac{\alpha}{36} u_{3x}u_{4x} + \frac{\alpha}{60}u_{2x}u_{5x} + \frac{\alpha}{180}u_xu_{6x} +\right. \\
& +\left. \frac{\beta}{4}u_{2x}^3 + \beta u_x u_{xx}u_{3x}  + \frac{\beta}{4}u_x^2u_{4x} + 4 \gamma u_x^3 u_{2x}\right) +  h^{8}R(u,h)\ .
\end{split} \right.
\end{equation}
Note however that the  assumption that $u$ and $v$ are smooth  has no a priori justification: it is not guaranteed that (\ref{eq:FPU-Continuous}) admits nontrivial initial data leading to solutions that remain smooth over long times. In contrast to this, we recall here that the dynamics of the KdV equation (and of its hierarchy) on $\mathbb{T}$ preserves the Sobolev smoothness and analyticity of initial data for all times
\cite{KP09}. 
\begin{remark}
Equation (\ref{eq:FPU-Continuousexpanded}) reveals that the FPUT equations (\ref{eq:FPU-Continuous}) may be thought of as a weakly dispersive and weakly nonlinear perturbation of the wave equation. The perturbation to  order $h^2$ is the so-called Boussinesq equation. This equation was proven to be integrable by Zakharov \cite{Z74} in order to explain the FPUT paradox. 
\end{remark}
\begin{remark}
Using Taylor's theorem, one may determine explicit bounds for the remainder $h^{8}R(u,h)$ in terms of $u$. For example, one may easily get a conditional estimate of the form: for every integer $k\geq 0$, there exists a constant $C_k$ independent of $u$ and $h$, such that 
$$\|R(u,h)\|_{C^k(\mathbb{T})} 
\leq C_k \|u\|_{C^{k+10}(\mathbb{T})}$$ 
as long as $\|u\|_{C^0(\mathbb{T})}<1$ and $|h|<1$. Similar estimates can be obtained for Sobolev norms. We will not pursue such explicit bounds any further. Instead, from now on we shall simply write
$\mathcal{O}(h^m)$ for expressions of the form $h^mR(u, h)$. 
\end{remark}
\noindent 
For the exact wave equation $u_{t}=v$, $v_t =u_{xx}$ it is common to make the change of variables $(u,v)\mapsto (U,V)$ defined by  $U:= 2a(u_x +v)$ and $V:= 2a(u_x - v)$, where $a\neq0$ is any constant. These functions $U$ and $V$ are   called {\it Riemann invariants}, and their evolution is determined by the equations of motion 
\begin{equation}\nonumber
\left\{
\begin{split}
      U_t & = U_x\, , \\ V_t & = - V_x\, . 
\end{split}
\right.
\end{equation}
of which the solutions 
$$U(x,t)=U_0(x+t)\, ,\ V(x,t)=V_0(x-t)$$ 
are unidirectional traveling waves. This motivates our 
definition (\ref{eq:Riemanninv}) of the discrete Riemann invariants for (\ref{eq:FPU-Continuous}), with the particular choice $a=\alpha$ and $u_x$ replaced by 
$D_hu$:  
\begin{equation} \nonumber
U := 2\alpha(D_{h}u + v) \, , \ \ 
 V := 2\alpha(D_{h}u - v)\,  .
\end{equation}
We stress that our definition of the discrete Riemann invariants -- choosing $a=\alpha$ and using $D_hu$ instead of $u_x$ -- is such that the structure of their evolution equations is relatively simple. To see this, note that \eqref{eq:FPU-Continuous} can also be written as
\begin{equation}
\label{eq:FPU-contsint}
\left\{
\begin{array}{l}
u_t=v\, , \\
v_t=h^{-2}D_hW'(h^2D_hu)\, .
\end{array}\right.
\end{equation}
\begin{remark}
This follows from the identity
\begin{equation}
\label{eq:Dhid}
F(G(x+h)-G(x))-F(G(x)-G(x-h))= hD_hF(hD_hG(x))\ ,
\end{equation}
which holds for any pair of functions $F$ and $G$, applied to $F=W'$ and $G=u$.
\end{remark}
\noindent
Taking the time derivative of $U$ and $V$ as defined in 
\eqref{eq:Riemanninv}, and using \eqref{eq:FPU-contsint}, we find 
\begin{equation}
\label{eq:exactriemanninv}
\left\{
\begin{array}{l}
U_t =  \mathscr{F}(U,V,h) :=  D_{h} \left[ U + f(U+V,h) \right] \, , \\ 
V_t = -\mathscr{F}(V,U,h) := - D_{h} \left[V + f(U+V,h) \right] \, .
\end{array}\right.
\end{equation}
Here
\begin{equation}
\label{eq:f}
f(z,h):= 2\alpha h^{-2}\left[ W'\left(\frac{h^2z}{4\alpha}\right) - \frac{h^2z}{4\alpha}\right] = 
\frac{h^2}{8}z^2 + \frac{\beta h^4}{32\alpha^2} z^3 +  
\frac{\gamma h^6}{128\alpha^3}z^4 +  \mathcal O(h^8)
\end{equation}
is the (rescaled) nonlinear part of the inter-particle force. Using that
\begin{equation}
\label{eq:Dhuexpand}
D_{h} = \partial_x + \frac{h^2}{24}\partial_{3x}+ \frac{h^4}{1920}\partial_{5x} + \frac{h^6}{322560}\partial_{7x} + \mathcal O(h^8)\, ,
\end{equation}
one may expand 
$\mathscr{F}(U,V,h)$ in
\eqref{eq:exactriemanninv} to high order. This gives (when $U,V$ are smooth):
\begin{equation}
\label{eq:EvolutionU}
\begin{split}
U_t = \mathscr{F}(U,V,h) \;&=\;  U_x +  h^{2}\left( \frac{1}{24}U_{3x} + \frac{1}{8} (U+V)^2_x \right)  \\ 
 & + h^{4}\left( \frac{1}{1920}U_{5x} + \frac{1}{192}(U+V)^2_{3x}  + 
 \frac{\beta}{32\alpha^2}(U+V)^3_x\right) \\   
& + h^{6}\left( \frac{1}{322560}U_{7x} + \frac{1}{15360} (U+V)^2_{5x}  + \frac{\beta}{768\alpha^2} (U+V)^3_{3x} + 
\frac{\gamma}{128\alpha^3}(U+V)^4_x \right) \\
& +\mathcal O(h^{8})\ .
\end{split}
\end{equation}
\begin{remark}
Here and in the the sequel, it is understood that
 $$F^n_{mx}:=(F^n)_{mx}$$ 
is short hand notation for the $m$-th derivative of the $n$-th power of the function $F$ (and not the $n$-th power of the $m$-th derivative).
\end{remark}
\begin{remark}
Note that the evolution equation for $V$ in \eqref{eq:exactriemanninv} can be obtained from that for $U$ in
\eqref{eq:exactriemanninv} by exchanging the roles of $U$ and $V$ and adding a minus sign. This expresses the fact that \eqref{eq:exactriemanninv}  possesses the time reversal symmetry 
$$(U, V, t) \mapsto (V,U, -t)\, , $$  
corresponding in turn to the time reversal symmetry 
$$(u,v,t)\mapsto (u, -v, -t)$$
of the original continuous FPUT system \eqref{eq:FPU-Continuous}. It is therefore not restrictive, in what follows, to focus our analysis on the evolution of $U$, since analogous results automatically hold for $V$.
\end{remark} 

\section{Quasi unidirectional waves}
\label{sec:unidwaves}

The subspace 
$$\{(U,V)\in C^{\infty}(\mathbb T, \mathbb{R}^2)\, |\, V \equiv 0\} \subset C^{\infty}(\mathbb T, \mathbb{R}^2)$$
is invariant under the evolution of the wave equation 
$$U_t=U_x\, , \ V_t=-V_x\, ,$$
and consists entirely of unidirectional traveling waves $$U(x,t)=U_0(x+t), V(x,t)=0\, .$$
The evolution of the discrete Riemann invariants of the FPUT chain is accurately described by the wave equation to order $h^0$ because 
$U_t = \mathscr F(U,V,h) = U_x + \mathcal{O}(h^2)$ and $V_t = -\mathscr F(V, U,h) = -V_x + \mathcal{O}(h^2)$. Nevertheless, the subspace of unidirectional waves is already not invariant anymore to order $h^2$ -- see Section \ref{sec:formulation}.

It is therefore quite natural to search for an (at least asymptotically) invariant manifold for the dynamics of \eqref{eq:exactriemanninv} lying close to the subspace of unidirectional waves. We will try to find such a submanifold in the form of the graph of a function $c=c(U,h)=\mathcal O(h^2)$ over the subspace of unidirectional waves:
$$\{(U,V)\in C^{\infty}(\mathbb T, \mathbb{R}^2)\, |\, V =c(U,h) = \mathcal O(h^2) \}\, .$$
\begin{center}
\emph{We think of this graph 
$\{(U,V) \, |\, V=c(U,h)\}$ as a manifold of quasi unidirectional waves}.
\end{center}
\begin{remark}
By Definition \eqref{eq:Riemanninv} we have that $V=\textcolor{black}{2} \alpha(u_x - v) + \textcolor{black}{\mathcal{O}(h^2)}$. The assumption that 
$V=\mathcal{O}(h^2)$ thus means that $u_x=v+\mathcal{O}(h^2)$.  This in turn implies,
by the equations of motion \eqref{eq:FPU-Continuous}, that
$u_t=u_x+\mathcal{O}(h^2)$ and $v_t=v_x+\mathcal{O}(h^2)$. Thus we recover an approximate left traveling wave in the original variables.
\end{remark}
\noindent 
Inserting the ansatz 
\begin{equation}
\label{eq:DefC}
V = c(U,h)\ 
\end{equation} 
into
system \eqref{eq:exactriemanninv} yields the following two equations:
\begin{equation}
\label{eq:invman}
\left\{
\begin{split}
U_t = & \mathscr{F}(U,c(U,h),h)\, , \\ 
c(U,h)_t  = &-\mathscr{F}(c(U,h),U,h)\, .\\
\end{split}\right.
\end{equation}
The first of these equations states that solutions $(U,V) = (U, c(U,h))$ inside the manifold of quasi unidirectional waves are  governed by a closed evolution equation
\begin{equation} \label{eq:quasiUevolution}
    U_t=\mathcal{F}(U,h):=\mathscr{F}(U,c(U,h),h)\, .
\end{equation}
Once a solution $U$ of \eqref{eq:quasiUevolution} has been found, the solution $V$ is completely determined by  \eqref{eq:DefC}. 
We therefore think of the functional relation \eqref{eq:DefC} between $U$ and $V$ as a slaving relation: the free variable $U$ completely determines the behaviour of the slave variable $V$.

Using the chain rule at its left hand side, the second equation in \eqref{eq:invman}  can be rewritten as
\begin{equation}
\label{eq:invman2}
\begin{split}
c'(U,h)\mathscr{F}(U,c(U,h),h)=-\mathscr{F}(c(U,h),U,h)\, .\\
\end{split}
\end{equation}
This equation can be thought of as an invariance equation. It states that a certain relation must hold between the $U$-component and the $V$-component of the vector field in \eqref{eq:exactriemanninv} restricted to the manifold of quasi unidirectional waves. This relation guarantees the manifold to be invariant.

\begin{remark}\label{rmk:gateaux}
We denote by $F'(U)$ the Gateaux derivative (or directional derivative) of an operator $F$ at $U\in C^{\infty}(\mathbb T, \mathbb R)$. In other words, given a smooth increment function $H\in C^{\infty}(\mathbb T, \mathbb R)$, we have $$F'(U) H := \lim_{\varepsilon \to 0} 
\frac{F(U+ \varepsilon H) - F(U)}{\varepsilon}\, .$$
The operators $F=F(U)$ that we consider in this paper generally are maps from $C^{\infty}(\mathbb T, \mathbb R)$ to $C^{\infty}(\mathbb T, \mathbb R)$ and may depend for example on $U, U_x, U_{xx}$, etc. but also on certain  averages or even primitives of $U$.
\end{remark}
\noindent The following result gives a formula for
$c(U,h)$ to order $h^4$, and for the vector field 
$\mathcal{F}(U,h)$ to order $h^6$. We use the notation  $\langle F\rangle := \int_{\mathbb T} F(x)\, dx$ for $F=F(x):\mathbb T \to \mathbb R$, see also Remark \ref{rmk:averages}. 
\begin{theorem}\label{thm:F}
A formal invariant manifold of quasi unidirectional waves, defined by the invariance equation \eqref{eq:invman2}, is given to order $h^4$ as the graph of the function
\begin{equation}
\label{eq:cU}
\begin{split}
c(U, h)  &:= h^2 \left\{ \frac{1}{16}\left( \langle U^2 \rangle  -U^2 \right)\right\}
+ \\
& + h^4 \left\{ \left(\frac{5}{384} - \frac{\beta}{64\alpha^2} \right) 
\left( U^3 -\langle U^3\rangle \right)\right.  - \left.\frac{1}{128}\langle U^2\rangle (U  -\langle U \rangle)
-\frac{1}{256}\left[ (U_x)^2- \langle (U_x)^2\rangle\right] \right\}+\\ 
& + 
\mathcal{O}(h^6)\, .
\end{split}
\end{equation} 
The vector field defined in \eqref{eq:quasiUevolution} that determines the evolution of quasi unidirectional waves,  is given by
\begin{equation}\label{eq:Fu}
\begin{split}
\mathcal{F}(U,h) &= U_x + \frac{h^{2}}{24}\left\{U_{3x} + 6 U U_x \right\} +  
\\
& + \frac{h^{4}}{1920} \left\{U_{5x} + 60 U_xU_{2x} +   20 UU_{3x}  + 90 \left( \frac{2\beta}{\alpha^2} - 1 \right) U^2U_x + 30 \langle U^2\rangle U_x \right\} + \\
&  +
\frac{h^{6}}{322560}\left\{U_{7x} + 420 U_{2x}U_{3x} + 210 U_xU_{4x}+ 42 UU_{5x} +315 \left( \frac{8\beta}{\alpha^2} - 5 \right) (U_x)^3 + \right.
\\	 & 
+ \left.  630 \left(\frac{12\beta}{\alpha^2}-7 \right)  U U_xU_{xx} + 630 \left( \frac{2\beta}{\alpha^2} - 1\right)   U^2U_{3x} + 210 \left(  \frac{48\gamma}{\alpha^3}    -  \frac{60\beta}{\alpha^2}  + 23  \right)  U^3U_x +  \right. 
\\
& + \left. 210 \langle U^2\rangle \left(U_{3x} + \left(\frac{18\beta}{\alpha^2} - 9 \right)UU_x \right) + 105 \left( 3 \langle U_x^2\rangle + \left(\frac{12 \beta}{\alpha^2}-10 \right) \langle U^3\rangle 
 +6\langle U^2\rangle\langle U \rangle\right) U_x \right\}+\\
&  + \mathcal{O}(h^8)\, .
\end{split}
\end{equation}
\end{theorem}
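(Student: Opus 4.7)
The plan is to solve the invariance equation \eqref{eq:invman2} for $c(U,h)$ as a formal power series in $h^2$ and then read off $\mathcal{F}(U,h)$ by substitution. First I would write $c(U,h) = h^2 c_2(U) + h^4 c_4(U) + \mathcal{O}(h^6)$ and exploit the chain rule identity $c'(U,h)[U_x] = \partial_x c(U,h)$, which is valid because each $c_{2k}$ is a differential polynomial in $U$ possibly augmented by nonlocal averages of the form $\langle U^n\rangle$ (the Gateaux derivative of such an average applied to $U_x$ vanishes because $\langle (U^n)_x\rangle = 0$). Extracting the $U_x$ contribution from $\mathscr{F}(U,c,h)$ and the $c_x$ contribution from $\mathscr{F}(c,U,h)$, equation \eqref{eq:invman2} rewrites as
\[
2\,\partial_x c(U,h) + c'(U,h)\bigl[\mathscr{F}(U,c,h) - U_x\bigr] + \bigl[\mathscr{F}(c,U,h) - c_x\bigr] = 0.
\]
Expanding both remaining brackets via \eqref{eq:EvolutionU} and matching powers of $h^{2k}$ yields a recursive sequence of first-order ODEs in $x$, of the shape $2\,\partial_x c_{2k} = \Phi_{2k}(U;\, c_2, \ldots, c_{2k-2})$, where each $\Phi_{2k}$ has zero mean (since it is built from perfect $x$-derivatives of $U$ and from terms killed by $\langle\cdot\rangle$) and is therefore integrable on $\mathbb{T}$.

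At order $h^2$, only the quadratic piece $\tfrac{1}{8}(U^2)_x$ of $\mathscr{F}_2(c,U) - c_x$ survives in $\Phi_2$, producing $2(c_2)_x + \tfrac{1}{8}(U^2)_x = 0$ and hence $c_2 = -\tfrac{1}{16}U^2 + \text{const}$. I would fix the integration constant by imposing the gauge $\langle c_{2k}\rangle = 0$, a harmless choice since a constant shift of $c$ corresponds to a constant shift of $V$; this gives $c_2 = \tfrac{1}{16}(\langle U^2\rangle - U^2)$. At order $h^4$ one evaluates the directional derivative through $c_2'(U)[H] = \tfrac{1}{8}(\langle UH\rangle - UH)$, combines it with the $c_2$-linearization of the quadratic piece of $\mathscr{F}_2(c,U)$ and with $\mathscr{F}_4(0,U)$, and uses integration by parts on $\mathbb{T}$ together with the identity $(U^2)_{3x} = 2UU_{3x} + 3(U_x^2)_x$ to reorganize everything as a perfect $x$-derivative. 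Integrating and again enforcing $\langle c_4\rangle = 0$ reproduces the expression stated in \eqref{eq:cU}.

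Finally, formula \eqref{eq:Fu} is obtained by direct substitution of the now-known $c_2$ and $c_4$ into $\mathcal{F}(U,h) = \mathscr{F}(U, c(U,h), h)$ and re-expansion in $h^2$. The main obstacle I anticipate is bookkeeping at order $h^6$: the coefficient collects contributions from $\mathscr{F}_6(U,0)$, from the $c_2$-linearization of $\mathscr{F}_4(U,V)$, from both the $c_4$-linearization and the $c_2^2$-quadratic part of $\mathscr{F}_2(U,V)$, and from the $c_2$-linearization of the $h^4$-correction inside $\mathscr{F}_2$. Each contribution must be simplified by integration by parts and combined with the others, and the nonlocal averages $\langle U\rangle, \langle U^2\rangle, \langle U_x^2\rangle, \langle U^3\rangle$ appearing in the final formula are precisely the price paid for insisting that every $c_{2k}$ be single-valued and of zero mean on $\mathbb{T}$.
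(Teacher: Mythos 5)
Your proposal follows essentially the same route as the paper: the power-series ansatz $c=h^2c_2+h^4c_4+\mathcal O(h^6)$, order-by-order solution of the invariance equation using $c'(U,h)U_x=\partial_x c(U,h)$, the zero-average gauge $\langle c_{2k}\rangle=0$ to fix the integration constants, and finally direct substitution of $c_2,c_4$ into $\mathscr F(U,c(U,h),h)$ to obtain \eqref{eq:Fu}. Your reorganization into $2\,\partial_x c_{2k}=\Phi_{2k}$ is only a cosmetic repackaging of the paper's equations \eqref{eq:c2eq}--\eqref{eq:c4eq} (and correctly keeps the factor $2$ coming from the two transport terms), so the two arguments are the same.
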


\begin{proof}
We shall  look for an approximate solution of  the invariance  equation  (\ref{eq:invman2}) of the form 
\begin{equation}
\label{eq:cUexp}
c(U,h)= h^2 c_2(U) + h^4c_4(U)+\mathcal{O}(h^6)\ .
\end{equation}
Combining (\ref{eq:invman2}) with the explicit expression for $\mathscr{F}$
given in \eqref{eq:EvolutionU}, we get
\[
\begin{split}
&\left[h^2 c_2(U) + h^4c_4(U) +\mathcal{O}(h^6)\right]'\left[U_x+h^2\left(
\frac{1}{24}U_{3x}+{\color{black}\frac{1}{8}(U^2)_x}\right)+\mathcal{O}(h^4)\right]=\\
&-\left[h^2c_2(U)+h^4c_4(U)+\mathcal{O}(h^6)\right]_x-h^2\left[\frac{1}{8}{\color{black}(U^2)_x}
+\frac{h^2}{24}(c_2(U))_{3x}+\frac{h^2}{4}(Uc_2)_x+\mathcal{O}(h^4)
\right]\\
& - h^4\left[\frac{1}{192}{\color{black}(U^2)_{3x}}+\frac{\beta}{32\alpha^2}{\color{black}(U^3)_x}\right]
+ \mathcal{O}(h^6)\ .
\end{split}
\]
Collecting terms of orders $h^2$ and $h^4$ produces two equations:
\begin{align}
\label{eq:c2eq}
c_2'(U)U_x & =-\frac{1}{8}UU_x\ ,
\\
\label{eq:c4eq}
c_4'(U)U_x & =-c_2'(U)\left(\frac{1}{24}U_{3x}+\frac{1}{8}{\color{black}(U^2)_x}\right)
-\frac{1}{24}(c_2(U))_{3x}-\frac{1}{4}(Uc_2(U))_x-\frac{1}{192}{\color{black}(U^2)_{3x}}
-\frac{\beta}{32\alpha^2}{\color{black}(U^3)_x}\ .
\end{align}
A solution of equation \eqref{eq:c2eq} is easily found:
\begin{equation}
\label{eq:c2sol}
c_2(U)=\frac{1}{16}\left(\langle U^2 \rangle-U^2\right)\ .
\end{equation}
Here, $\langle U^2 \rangle := \int_{\mathbb T} U^2(x)\, dx$ denotes the average of the function $U^2$. We could have omitted the term $\frac{1}{16}\langle U^2 \rangle$ in \eqref{eq:c2sol}, but including this term in $c_2(U)$ guarantees that $\langle c_2(U)\rangle = 0$. See also Remark \ref{rmk:averages}.

Inserting \eqref{eq:c2sol} into equation \eqref{eq:c4eq} for $c_4(U)$ gives
\[
c_4'(U)U_x=\left(\frac{5}{128}-
\frac{3\beta}{64\alpha^2}\right)U^2U_x-\frac{1}{128}\langle U^2\rangle U_x
-\frac{1}{128}U_{xx}U_x\ ,
\]
which admits as a solution
\begin{equation}
\label{eq:c4sol}
c_4(U)=\left(\frac{5}{384}-
\frac{\beta}{64\alpha^2}\right)U^3-
\frac{1}{128}\langle U^2 \rangle (U -\langle U \rangle)
-\frac{1}{256}[(U_x)^2-\langle (U_x)^2 \rangle]\ .
\end{equation}
Again, we made sure that $\langle c_4(U)\rangle =0$ by choosing an appropriate ``integration constant''. Together, \eqref{eq:cUexp}, \eqref{eq:c2sol}
and \eqref{eq:c4sol} produce \eqref{eq:cU}.

The evolution equation for $U$ is obtained from \eqref{eq:quasiUevolution}  by inserting 
the expansion \eqref{eq:cUexp} into  \eqref{eq:EvolutionU}, giving
\[
\begin{split}
U_t &=\mathcal{F}(U,h):=\mathscr{F}(U,h^2c_2(U)+h^4c_4(U)+\mathcal{O}(h^6),h) = \\
& = U_x+h^2\left(\frac{1}{24}U_{3x}+\frac{1}{8}{\color{black}(U^2)_x}\right)+
 h^4\left[\frac{1}{1920}U_{5x}+
\frac{1}{192}{\color{black}(U^2)_{3x}+\frac{\beta}{32\alpha^2}(U^3)_x}+\frac{1}{4}(Uc_2(U))_x\right] +\\
& + h^6\left[ \frac{1}{322560}U_{7x}+{\color{black}\frac{1}{15360}(U^2)_{5x}+
\frac{\beta}{768\alpha^2}(U^3)_{3x}+\frac{\gamma}{128\alpha^3}(U^4)_x}+\right.\\
& +\left.\frac{1}{8}(c_2(U)^2)_x+\frac{1}{96}(Uc_2(U))_{3x}+
\frac{3\beta}{32\alpha^2}(U^2c_2(U))_x+\frac{1}{4}(Uc_4(U))_x
\right]+\mathcal{O}(h^8)\ .
\end{split}
\] 
Substituting the expression \eqref{eq:c2sol} and \eqref{eq:c4sol} that we found for $c_2(U)$ and $c_4(U)$  
gives \eqref{eq:Fu}.
This completes the proof of the theorem.
\end{proof}
\begin{remark}
\label{rmk:averages}
For any (smooth) function $F=F(x): \mathbb T \to \mathbb R$ we denote by 
$$\langle F\rangle := \int_{\mathbb T} F(x)\, dx\, ,$$
the {\it average} of $F$ over $\mathbb T$. In Theorem \ref{thm:F} we encounter averages of the functions $F=U, U^2, U^3$ and $(U_x)^2$. By including such average terms as ``integration constants'' in our choice for $c_2$ and $c_4$, we make sure that $\langle c_2(U) \rangle =\langle c_4(U) \rangle = 0$. This choice is of course somewhat arbitrary, but it makes that the solutions $c_2$ and $c_4$ of \eqref{eq:c2eq} and \eqref{eq:c4eq} are unique.
\end{remark}

\section{A normal form theorem}
\label{sec:normform}

In \eqref{eq:Riemanninv} we defined the discrete Riemann invariants $U$ and $V$ in terms of the interpolating profiles $u$ and $v$ that were introduced in Section \ref{sec:formulation}. Our choice to define them as $U=2\alpha (D_hu +v)$ and $V=2 \alpha(D_hu - v)$  helped us in obtaining a closed expression for the equations that govern their evolution, see \eqref{eq:exactriemanninv}, but  was otherwise somewhat arbitrary. For example, we could have equally well defined $U$ as $2 \alpha (u_x+v)$ and $V$ as $2 \alpha (u_x-v)$, the difference between the former and the latter choices being only of the order $h^2$.  

Another definition of the variables $U$ and $V$ would have resulted in another expression for the slaving relation $V=c(U,h)$. More importantly, another choice of $U$ would have given another -- and perhaps  
simpler -- expression for the reduced vector field $\mathcal{F}(U,h)$ that determines the dynamics  on the invariant manifold of quasi unidirectional waves, see \eqref{eq:quasiUevolution}. This motivates us to look for a near-to-identity transformation $U\mapsto U + \mathcal O(h^2)$, i.e.,  a small  change in the definition of $U$, that simplifies the evolution equation $U_t = \mathcal{F}(U,h)$ as much as possible. It turns out that we can prove the following normal form theorem.

\begin{theorem}\label{th:Main}
Recall that $\mathcal{F} = \mathcal{F}(U,h): C^{\infty}(\mathbb{T}, \R) \to C^{\infty}(\mathbb{T}, \R)$ admits the asymptotic expansion \eqref{eq:Fu}. There exists a formal change of variables inside the space $C^{\infty}(\mathbb{T}, \R)$ of the form
\begin{equation}\label{eq:NearToIdTransf}
U \mapsto U + h^2 \widetilde G_2(U) +h^4 \widetilde G_4(U) + h^6 \widetilde G_6(U,t) + \mathcal{O}(h^8) \, ,
\end{equation}
 that transforms the evolution equation $U_t = \mathcal{F}(U,h)$ into
\begin{align}
\label{eq:NormalisedThesis41}
\begin{split}
U_t  = \widetilde{\mathcal{F}}(U,h, t)    & {}^{} = \hspace{5mm} {}^{} \mathcal{C}_1 (U,h)\ K_1(U)+  \\ 
 & {}^{}+ h^{2}\ \mathcal{C}_3(U,h) \ K_3(U)+ \\
& {}^{}+ h^{4}\  \mathcal{C}_5(U,h)\ K_5(U)+ \\
& {}^{} + h^{6}\  \mathcal{C}_7(U, h) \left[K_7(U)+R(U)\right] + \mathcal O(h^{8})\, .
\end{split}
\end{align}
\noindent 
Here, $K_1(U), K_3(U), K_5(U), K_7(U)$ are the first four commuting vector fields in the KdV
hierarchy given in Remark \ref{rmk:hierarchy}. 
The scalars $\mathcal{C}_1(U,h),\mathcal{C}_3(U,h), \mathcal{C}_5(U,h), \mathcal{C}_7(U,h)$ are constants of motion of the KdV hierarchy. They are given explicitly in Remark \ref{rmk:constantsFPU} below.
 The term $R=R(U)$ can be chosen equal to zero when $$14 \alpha^3 - 27 \alpha \beta +12 \gamma=0\, .$$
\end{theorem}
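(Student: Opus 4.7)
The plan is to derive Theorem~\ref{th:Main} as a direct consequence of the spatially periodic Hiraoka-Kodama normal form theorem (Theorem~\ref{thm:Kodama}) applied to the vector field $\mathcal{F}(U,h)$ of Theorem~\ref{thm:F}. First I would set up the near-identity change of variables
\[
\Phi(U,t) = U + h^2 \widetilde G_2(U) + h^4 \widetilde G_4(U) + h^6 \widetilde G_6(U,t) + \mathcal{O}(h^8)
\]
and compute order-by-order how its pushforward transforms $\mathcal{F}$. A direct Taylor expansion shows that the transformed field equals $\mathcal{F}(U,h) + h^2[\widetilde G_2,\mathcal{F}](U) + h^4\bigl([\widetilde G_4,\mathcal{F}] + \tfrac12[\widetilde G_2,[\widetilde G_2,\mathcal{F}]]\bigr)(U) + \mathcal{O}(h^6)$ (plus a time-derivative term from $\widetilde G_6$ at order $h^6$), where $[G,F] := G'F - F'G$ is the Lie bracket of vector fields on $C^{\infty}(\mathbb{T},\mathbb{R})$.

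A key observation is that every translation-invariant operator $G$ satisfies $[G,K_1] \equiv 0$, since translation invariance forces $(G(U))_x = G'(U)U_x$. Consequently no choice of $\widetilde G_{2k}$ can affect the $h^{2k}$-coefficient; its first nontrivial contribution appears one order later, through the bracket $[\widetilde G_{2k}, \tfrac{1}{24}K_3]$. The heart of the normal form argument therefore reduces to a cohomological statement: modulo the image of $\operatorname{ad}_{K_3}$, the space of translation-invariant local vector fields of each odd scaling weight $2k+1$ is spanned, up to multiplication by conserved quantities of the KdV hierarchy, by the single element $K_{2k+1}$. This is precisely the content of Theorem~\ref{thm:Kodama} (the spatially periodic adaptation of Hiraoka-Kodama); granting it, the normal form \eqref{eq:NormalisedThesis41} is produced by choosing $\widetilde G_{2k}$ order by order to cancel all $\operatorname{ad}_{K_3}$-exact pieces of $\mathcal{F}_{2k+2}$.

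Next I would execute this algorithm on \eqref{eq:Fu}. At order $h^2$ there is nothing to do, since $\mathcal{F}_2 = \tfrac{1}{24}K_3$, giving $\mathcal{C}_3 = \tfrac{1}{24}$ to leading order. At order $h^4$, matching the terms of weight five against $K_5$ and against the conserved-quantity multiple $\langle U^2\rangle K_3$ determines $\widetilde G_2$ uniquely (modulo its kernel), and reads off $\mathcal{C}_5$ together with the $h^2$-correction to $\mathcal{C}_3$. At order $h^6$ one repeats the procedure with $\widetilde G_4$, using the previously-determined $\widetilde G_2$ to compute the second-order brackets. The residual piece of $\mathcal{F}_6$ lying neither in $\operatorname{ad}_{K_3}\widetilde G_4$ nor in the span of $K_{2k+1}$ multiplied by conserved quantities is by definition the obstruction $R(U)$; the time-dependence allowed in $\widetilde G_6(U,t)$ is introduced at this order to absorb secular terms that the Hiraoka-Kodama procedure may generate once iterated brackets with $K_3$ and $K_5$ are taken into account.

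The main obstacle is the weight-seven bookkeeping at order $h^6$. One must enumerate all differential monomials of scaling weight seven (such as $U_x^3$, $UU_xU_{2x}$, $U^2U_{3x}$, $U_{2x}U_{3x}$, $U^3U_x$ and their average-weighted companions $\langle U^2\rangle UU_x$, $\langle U_x^2\rangle U_x$, etc.), compute the projection onto the cokernel of $\operatorname{ad}_{K_3}$, and identify the single algebraic condition on $(\alpha,\beta,\gamma)$ under which this projection is a scalar multiple of $K_7$ alone. Combining the explicit coefficients in~\eqref{eq:Fu} with the second-order bracket $\tfrac12[\widetilde G_2,[\widetilde G_2,\tfrac{1}{24}K_3]]$ and the previously-fixed $\widetilde G_2$ then yields precisely the relation $14\alpha^3 - 27\alpha\beta + 12\gamma = 0$, under which $R\equiv 0$ and $\widetilde G_6$ may be chosen time-independent. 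The assertion that each $\mathcal{C}_{2k+1}(U,h)$ is a function only of the three KdV integrals in~\eqref{eq:KdV-integrals} is forced by scaling and translation invariance: these are the only functionals carrying the correct weights to couple to $K_1,K_3,K_5,K_7$ at the orders in $h$ at which they appear in~\eqref{eq:NormalisedThesis41}.
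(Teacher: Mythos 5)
Your proposal follows essentially the same route as the paper: Theorem \ref{th:Main} is obtained by reading off the coefficients $\mathbf{A},\mathbf{B},\mathbf{C}$ of $\mathcal{F}(U,h)$ from \eqref{eq:Fu}, feeding them into the periodic Hiraoka--Kodama theorem (Theorem \ref{thm:Kodama}) to perform the two autonomous normalisation steps, using a final time-dependent shift $U\mapsto U-h^6C_7\lambda_7\int_0^t\langle U_x^3(s)\rangle\,ds$ to remove the spatially constant term, and checking that the obstruction $r$ of \eqref{eq:KodamaRCorrect} reduces to a multiple of $14\alpha^3-27\alpha\beta+12\gamma$. One small caveat: your ``cohomological statement'' is overstated as written --- Theorem \ref{thm:Kodama} does \emph{not} assert that, at the $h^6$ level, the cokernel of $\operatorname{ad}_{K_3}$ is spanned by $K_7$ together with conserved-quantity multiples of the lower flows (the one extra direction is precisely where the obstruction $R(U)$, i.e.\ the condition $r=0$, lives), although your subsequent paragraph acknowledges this correctly.
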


\begin{remark} \label{rem:Toda}
A Toda chain is an FPU chain with a potential energy of the form
\begin{equation}\label{eq:TodaPotentialEnergy}
W(z) = W_{{\rm T}}(z):= \frac{e^{2 \alpha z}-(1+2 \alpha z)}{4 \alpha^2} = \frac{1}{2} z^2 + \frac{\alpha}{3} z^3 + \frac{\alpha^2}{6}z^4 + \frac{\alpha^3}{15} z^5 +  \ldots .
\end{equation}
 Toda chains thus define a specific one-parameter family of FPU chains, for which
$$\beta= \beta_{T} := \frac{2}{3} \alpha^2,\ \gamma= \gamma_T:=\frac{1}{3} \alpha^3,\ \mbox{etc.}$$
One readily checks that $14 \alpha^3 - 27 \alpha \beta_T +12 \gamma_T=0$. This means that for the Toda chains, the normal form to order $h^6$ given in Theorem \ref{th:Main} lies in the KdV hierarchy. 
\end{remark}

\noindent Theorem \ref{th:Main} follows from a more general result that we will state below as Theorem \ref{thm:Kodama}. A result comparable to Theorem \ref{thm:Kodama} was originally formulated by Hiraoka and Kodama in \cite{HirKod}. These authors consider evolution equations similar to \eqref{eq:Fu} for functions $U\in C^{\infty}(\mathbb{R}, \mathbb{R})$. In this paper we work with functions $U\in C^{\infty}(\mathbb{T}, \mathbb{R})$. We thus need to adapt the proof of Hiraoka and Kodama. 
\begin{remark}
The main new feature in our adaptation concerns the use of primitives $F_{-x}$ of functions $F: \mathbb T \to \mathbb R$. Primitives of functions $F\in C^{\infty}(\mathbb R, \mathbb R)$ are always defined. In contrast, a function $F\in C^{\infty}(\mathbb T, \mathbb R)$ only possesses a well defined primitive $F_{-x}\in C^{\infty}(\mathbb T, \mathbb R)$ if $\langle F \rangle = \int_{\mathbb T} F(x)\, dx = 0$. 
To deal with this complication we shall only make use of  transformations  in $C^{\infty}(\mathbb T, \mathbb R)$ that map the space of zero-average functions on $\mathbb T$ into itself. This in turn forces us to add some extra ``average'' terms to the  transformations that were originally considered by Hiraoka and Kodama.

In particular, we shall agree that, if $F$ has zero average, then $F_{-x}$ is the unique primitive of $F$ that has zero average itself. For a general $F\in C^{\infty}(\mathbb T, \mathbb R)$, we will then have the formulas
$$\langle (F - \langle F\rangle)_{-x} \rangle = 0 \ \mbox{and}\ ((F-\langle F\rangle)_{-x})_x = F-\langle F\rangle\, .$$
\end{remark}
 
\noindent Examples of transformations   involving primitives occur in Theorem \ref{thm:Kodama} below, which   makes use of maps $G_2, G_4 : C^{\infty}(\mathbb T, \mathbb R) \to C^{\infty}(\mathbb T, \mathbb R)$ given by formulas of the form
\begin{equation}\label{eq:DefG2}
	G_2(U) \;:=\; \frac{C_5}{C_3} \left(a_1 U_{2x} + a_2 (U^2-\langle U^2 \rangle) +a_3 (U_x(U- \langle U\rangle)_{-x} + \langle U^2\rangle - \langle U \rangle^2) +a_4\langle U \rangle (U-\langle U \rangle) \right)
\end{equation}
and
\begin{equation}\label{eq:DefG4}
\begin{split}
	G_4(U)\;&:=\;  \frac{C_7}{C_3}\big(b_1U_{4x} + b_2((U_x)^2-\langle (U_x)^2 \rangle ) + b_3 (UU_{2x}+\langle (U_x)^2\rangle) + b_4(U^3-\langle U^3 \rangle ) \\
	&\qquad + b_5(U_x(U^2-\langle U^2 \rangle)_{-x}+\langle U^3 \rangle - \langle U \rangle \langle U^2 \rangle)\\
	&\qquad +b_6 ((U_{3x} + 6UU_{x})(U-\langle U \rangle)_{-x}+3 \langle U^3 \rangle - \langle (U_x)^2 \rangle -3 \langle U^2 \rangle \langle U \rangle	) \\
	&\qquad +b_7 \langle U \rangle U_{xx} + b_8 \langle U \rangle (U^2-\langle U^2 \rangle)\\
	&\qquad +b_9 \langle U \rangle (U_x(U-\langle U \rangle)_{-x}+ \langle U^2 \rangle - \langle U \rangle^2)+b_{10} \langle U \rangle^2(U-\langle U \rangle)+ b_{11} \langle U^2 \rangle( U-\langle U \rangle)\\
	&\qquad  +b_{12}\langle (U_x)^2 \rangle + b_{13} \langle U^3 \rangle \big) \, .
\end{split}
\end{equation}
Here, $a_1, \ldots, a_4, b_1, \ldots b_{13}, C_3, \ldots, C_7\in \mathbb R$ are constants that will be specified later. Formulas \eqref{eq:DefG2} and \eqref{eq:DefG4} are such that $\langle G_2(U)\rangle = \langle G_4(U)\rangle = 0$ for every $U\in C^{\infty}(\mathbb T, \mathbb R)$. 

The following theorem is our version of the result of Hiraoka and Kodama.
\begin{theorem}[Hiraoka--Kodama on $\mathbb{T}$]\label{thm:Kodama}
Consider an evolution equation for $U\in C^{\infty}(\mathbb T, \mathbb R)$ of the  form
\begin{equation} \label{eq:KodamaEq}
\begin{split}
U_t \;=\;  \mathrm{F}(U,h)  \;=\; & \hspace{3.5mm} C_1 \, U_x  \\
+\; & h^{2}  C_3 \left(U_{3x}+6UU_x\right) \\
+\; & h^{4} C_5 \left(U_{5x} + A_1 U_xU_{2x} + A_2UU_{3x} + A_3U^2U_x  +A_4 \langle U^2 \rangle U_x \right) \\ 
+ \; & h^{6} C_7 \, \big(U_{7x} + B_1U_{2x}U_{3x} + B_2 U_xU_{4x} + B_3 UU_{5x} + B_4 (U_x)^3 + B_5 UU_xU_{2x} +  \\
& \qquad + B_6U^2U_{3x} + B_7 U^3U_x  +\langle U \rangle \big(B_8 U_{5x} +B_9 U_x U_{2x}+B_{10} UU_{3x} +B_{11} U^2 U_x\big) \\
& \qquad +\langle U^2 \rangle (B_{12} U_{3x}+B_{13} U U_x) + \langle U \rangle^2 (B_{14} U_{3x}+ B_{15} U U_x) \\
&\qquad  +(B_{16} \langle U^3 \rangle + B_{17} \langle (U_x)^2 \rangle  + B_{18} \langle U \rangle \langle U^2 \rangle + B_{19} \langle U \rangle^3) U_x + B_{20} \langle (U_x)^3 \rangle \big) \\
+ \; &  \mathcal{O}(h^8)\, .
\end{split}
\end{equation}
Here $A_1, \ldots, A_4$, $B_1, \ldots, B_{20}, C_1, \ldots, C_7$ are scalar coefficients.
\begin{itemize}
\item[i)] By a normal form transformation of the form 
\begin{equation}\label{eq:NormalForm5}
U \mapsto U + h^2 G_2(U) + \mathcal{O}(h^4)\, ,
\end{equation}
with $G_2$ of the form \eqref{eq:DefG2}, one can
 transform equation \eqref{eq:KodamaEq} into the form
\begin{equation}\label{eq:KDVto5}
U_t \; = \; \mathcal{C}_1(U,h) K_1(U) + h^{2}  \mathcal{C}_3(U,h) K_3(U) + h^{4} \mathcal{C}_5(U,h)  K_5(U) + \mathcal{O}(h^{6})\, .
\end{equation}
The scalars $\mathcal{C}_1(U,h), \ldots, \mathcal{C}_5(U,h)$ are  constants of motion of the KdV hierarchy,  explicitly given by
\begin{equation}\label{eq:veryexplicitfirst}
	\begin{split}
		\mathcal{C}_1(U, h) \;&=\; C_1(1+ h^4 (\widetilde{A}_4 \langle U^2 \rangle + \widetilde{A}_5 \langle U \rangle^2) \, , \\
		\mathcal{C}_3(U, h)\;&=\; C_3(1+h^2 \widetilde{A}_6 \langle U \rangle) \, , \\
		\mathcal{C}_5(U, h) \;&=\; C_5  \, .
	\end{split}
\end{equation}
The scalars $\widetilde{A}_4, \widetilde{A}_5, \widetilde{A}_6$ in  \eqref{eq:veryexplicit}  depend on $A_1, A_2, A_3$ and $A_4$ as follows:
\begin{align}
  \label{eq:EquationsForAT}  
\begin{split}
		\widetilde{A}_4 \;&=\; A_3+A_4-4 A_2+10\, , \\
		\widetilde{A}_5 \; &= \; 20-2 A_2 \, , \\
		\widetilde{A}_6 \; &= \; A_2-10 \, . \\
	\end{split}  
\end{align} 
\item[ii)] By a further normal form transformation of the form
\begin{equation}\label{eq:NormalForm7}
U\mapsto \ U + h^4 G_4(U) + \mathcal{O}(h^8)\, ,
\end{equation}
with $G_4$ of the form \eqref{eq:DefG4}, one can subsequently 
 transform  equation \eqref{eq:KDVto5} into the form
 \begin{equation}
     \label{eq:KDVto7}
\begin{split}
U_t\; =\; &  \mathcal{C}_1(U,h) K_1(U) + h^{2} \mathcal{C}_3(U,h) K_3(U) + h^{4} \mathcal{C}_5(U,h) K_5(U) \\ 
 +\; & h^{6} \mathcal{C}_7(U,h) [ K_7(U)+ R(U) +    \lambda_7 \langle (U_x)^3 \rangle  ] +\mathcal O(h^{8})\, .
\end{split}
 \end{equation}
The term $R(U)$ can be chosen equal to zero when $r = r({\bf A},{\bf B},{\bf C})=0$, where
\begin{equation}\label{eq:KodamaRCorrect}
\begin{split}
	r \;:=&\;1680-72 B_1 + 180 B_2- 510 B_3-72 B_3+27 B_5- 72 B_4+27 B_5+24 B_6-9B_7 \\
	&\;+\frac{C_5^2}{C_3 C_7} \left\{-2400+6A_1^2+670 A_2-30 A_1 A_2-4A_2^2-60 A_3+3A_1A_3-A_2A_3 \right\} \, .
	\end{split}
\end{equation}
The scalars $\mathcal{C}_1(U,h), \ldots, \mathcal{C}_7(U,h)$ are  constants of motion of the KdV hierarchy,  explicitly given by  
\begin{equation}\label{eq:veryexplicit}
	\begin{split}
		\mathcal{C}_1(U, h) \;&=\; C_1(1+ h^4 (\widetilde{A}_4 \langle U^2 \rangle + \widetilde{A}_5 \langle U \rangle^2) + h^6 (\lambda_4 \langle (U_x)^2-2 U^3 \rangle + \lambda_5 \langle U \rangle \langle U^2 \rangle+\lambda_6 \langle U \rangle^3)\, , \\
		\mathcal{C}_3(U, h)\;&=\; C_3(1+h^2 \widetilde{A}_6 \langle U \rangle + h^4 ( \lambda_2 \langle U^2 \rangle + \lambda_3 \langle U \rangle^2)) \, , \\
		\mathcal{C}_5(U, h) \;&=\; C_5(1+h^2 \lambda_1 \langle U \rangle) \, , \\
		\mathcal{C}_7(U, h) \: & =\; C_7\, .
	\end{split}
\end{equation}
Here, the  $\widetilde{A}_4, \widetilde{A}_5, \widetilde{A}_6$ are as in \eqref{eq:EquationsForAT}, the scalar $\lambda_4$ is a free parameter, and
\begin{align}
& 	\begin{split}\label{eq:EquationsForRs}
		\lambda_1 \;&=\; -14+B_3+B_8 \, , \\
		\lambda_2 \;&=\; 42+B_{12}-8 B_3+B_6-\frac{2 C_5^2}{3 C_3 C_7}(A_2-10)^2 \, , \\
		\lambda_3 \;&=\; 28+B_{10}+B_{14}-2 B_3-10 B_8+ \frac{2 C_5^2}{3 C_3 C_7}(100-20 A_2+A_2^2) \, , \\
		\lambda_5 \;&=\; 10B_3-2 B_6+10B_8-4 B_{10}+B_{11}-6B_{12}+B_{13}+B_{18} \\
		&\qquad+ \frac{C_5^2}{C_3 C_7}(-100+40 A_2-A_2^2-10A_3-\frac{2}{3} A_2 A_3+\frac{1}{3}A_3^2-10 A_4 +\frac{1}{3} A_3 A_4) \, , \\
		\lambda_6 \;&=\; -56+4 B_3+20 B_8-2 B_{10}-6B_{14}+B_{15}+B_{19}\\
		& \qquad+\frac{2 C_5^2}{3 C_3 C_7}(-100+A_2^2+10 A_3 - A_2 A_3+10 A_4-A_2 A_4) \, , \\
		\lambda_7 \;&=\; \frac{28}{3}+ \frac{B_{16}}{2}+B_{17}+B_{20}- \frac{7 B_3}{3}+\frac{B_6}{3}\\&\qquad +\frac{C_5^2}{18 C_3 C_7}(-300 +70A_2-A_2^2-A_2 A_3-3 A_1A_4+6 A_2 A_4) \, .
	\end{split}
\end{align}
\end{itemize}
\end{theorem}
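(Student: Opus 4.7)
The strategy is to apply the two near-identity transformations sequentially and compute, via the chain rule and order by order in $h$, how each modifies the vector field $\mathrm{F}(U,h)$. If $\tilde U = U + h^{2k} G(U)$, then a standard expansion yields
$$
\tilde U_t \;=\; \mathrm{F}(\tilde U,h) \;+\; h^{2k}\bigl(G'(\tilde U)\,F_0(\tilde U) - F_0'(\tilde U)\,G(\tilde U)\bigr) \;+\; \mathcal{O}(h^{2k+2})\, ,
$$
where $F_0 = C_1 U_x$ is the leading part of $\mathrm{F}$. The crucial algebraic observation is that $F_0$ generates spatial translation, so every admissible $G$ -- i.e.\ every $G$ built from $U$ and its $x$-derivatives, spatial averages $\langle\cdot\rangle$, and zero-mean primitives $(\cdot)_{-x}$ -- is equivariant under translation and therefore satisfies $G'(U)\, U_x = \partial_x G(U)$, hence $G'(U) F_0(U) - F_0'(U) G(U) = 0$. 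Consequently, the transformation $U\mapsto U + h^{2k} G(U)$ preserves the equation at all orders $\le h^{2k}$ and first contributes at order $h^{2k+2}$, via the analogous commutator with $F_2 = C_3 K_3$ together with lower-order corrections coming from the re-expansion of $F_0$.

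For item (i), I substitute $G_2$ from \eqref{eq:DefG2} into this scheme; the order-$h^4$ correction reads
$$F_4(U) \;+\; C_3\bigl(G_2'(U) K_3(U) - K_3'(U) G_2(U)\bigr)\, .$$
Expanding produces a linear combination of the monomials $U_{5x}, U_x U_{2x}, U U_{3x}, U^2 U_x$ together with the average-decorated variants $\langle U\rangle U_{3x}, \langle U\rangle U U_x, \langle U^2\rangle U_x, \langle U\rangle^2 U_x$. I then choose $a_1,\ldots,a_4$ so that the result equals $C_5 K_5(U)$ plus a scalar multiple of $K_3$ (absorbed into $\mathcal C_3$) plus a scalar multiple of $K_1 = U_x$ (absorbed into $\mathcal C_1$). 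This produces a triangular linear system that is always solvable; matching coefficients reproduces precisely \eqref{eq:EquationsForAT} together with the expressions \eqref{eq:veryexplicitfirst} for $\mathcal C_1, \mathcal C_3, \mathcal C_5$.

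For item (ii), with orders $h^{\le 4}$ already in normal form, I apply $U\mapsto U + h^4 G_4(U)$ with $G_4$ as in \eqref{eq:DefG4}. The previous argument shows that orders $h^{\le 4}$ are untouched, and the order-$h^6$ vector field becomes
$$F_6(U) + C_3\bigl(G_4'(U) K_3(U) - K_3'(U) G_4(U)\bigr) + (\text{quadratic contribution from } G_2)\, ,$$
where the last piece is a cross term inherited from item (i). Expanding yields a linear combination of $U_{7x}, U_{2x}U_{3x}, U_xU_{4x}, UU_{5x}, (U_x)^3, UU_xU_{2x}, U^2U_{3x}, U^3U_x$ together with their $\langle U\rangle$- and $\langle U^2\rangle$-decorated variants and the non-local invariants $\langle U_x^2-2U^3\rangle, \langle(U_x)^3\rangle$. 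After grouping these into the KdV-hierarchy blocks $K_7$, $\langle U\rangle K_5$, $\langle U^2\rangle K_3$, $\langle U\rangle^2 K_3$, $\langle U_x^2-2U^3\rangle K_1$, $\langle U\rangle\langle U^2\rangle K_1$, $\langle U\rangle^3 K_1$ (absorbed into the $\mathcal C_j$'s) together with the $\lambda_7 \langle(U_x)^3\rangle K_1$ term, the resulting linear system for $b_1,\ldots,b_{13}$ falls one equation short of solvability; the unique residual scalar combination of the $B_j$ and $C_j$ is exactly $r(\mathbf{A},\mathbf{B},\mathbf{C})$ in \eqref{eq:KodamaRCorrect}, which coefficientizes the obstruction $R(U)$.

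Finally, each $\mathcal C_j(U,h)$ in \eqref{eq:veryexplicit} is manifestly a polynomial in the three KdV integrals $\langle U\rangle, \langle U^2\rangle, \langle U_x^2-2U^3\rangle$ of \eqref{eq:KdV-integrals}; since these commute with every member of the KdV hierarchy, so does every $\mathcal C_j$. The hard part will be the bookkeeping in item (ii): roughly thirty independent monomials, twenty structure constants $B_1,\ldots,B_{20}$, and thirteen parameters $b_1,\ldots,b_{13}$. The content of the theorem is that the cokernel of the cohomological map $G_4\mapsto C_3(G_4'K_3 - K_3'G_4) + (\text{quadratic correction from }G_2)$, taken modulo the KdV-hierarchy blocks listed above, is exactly one-dimensional and is measured by the explicit scalar $r$ of \eqref{eq:KodamaRCorrect}; the single remaining free parameter $\lambda_4$ in \eqref{eq:veryexplicit} corresponds to the kernel of this same cohomological map.
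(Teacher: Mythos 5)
Your proposal follows essentially the same route as the paper: two successive near-identity (Lie-series) transformations whose leading commutator with $C_1 U_x$ vanishes by translation equivariance, so that the homological operator at each step is the bracket with $C_3 K_3$, reducing the problem to a linear system for $a_1,\dots,a_4$ (always solvable) and then for $b_1,\dots,b_{13}$ (solvable modulo a one-dimensional obstruction identified with $r$, after absorbing the KdV-integral multiples of $K_1,K_3,K_5$ into the $\mathcal{C}_j$ and accounting for the cross term $\tfrac12[G_2,\mathrm{F}_5+\mathrm{N}_5]$). The paper carries this out by tabulating the relevant Lie brackets and computing the $20\times 13$ matrix $M$ together with the seven vectors spanning $(\mathrm{ran}\,M)^{\perp}$, which is exactly the bookkeeping your plan defers.
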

\noindent
We will postpone the proof of Theorem \ref{thm:Kodama} to the next section, and  conclude the current section by proving Theorem \ref{th:Main} from Theorem \ref{thm:Kodama}.

\begin{proof}[Proof of Theorem \ref{th:Main}]
The vector field $\mathcal{F}(U,h)$ given in equation \eqref{eq:Fu} is of the form of the vector field $\mathrm{F}(U,h)$ in equation \eqref{eq:KodamaEq} with
\begin{equation}\label{eq:ABC}
	\begin{split}
		{\bf A}\;&=\; (A_1, A_2, A_3, A_4) \;=\; \left(60,20,90 \left(\frac{2 \beta }{\alpha ^2}-1\right),30\right) \, ,  \\
		{\bf B}\;&=\;(B_1, \ldots, B_{20}) \;=\; \left(420,210,42,315 \left(\frac{8 \beta }{\alpha ^2}-5\right),630 \left(\frac{12 \beta }{\alpha ^2}-7\right),630 \left(\frac{2 \beta }{\alpha
   ^2}-1\right), \right.\\
   &\qquad\left. 210 \left(\frac{48 \gamma }{\alpha ^3}-\frac{60 \beta }{\alpha ^2}+23\right) ,0,0,0,0,210,210 \left(\frac{18 \beta }{\alpha
   ^2}-9\right),0,0,105 \left(\frac{12 \beta }{\alpha ^2}-10\right),315,0,0,0\right) \, ,
\\
  {\bf C} \; & =\; (C_1, C_3, C_5, C_7) = \left(1, \frac{1}{24},  \frac{1}{1920}, \frac{1}{322560} \right) \, . 
  	\end{split}
\end{equation}
(NB: for clarity, we denote vectors with bold characters.) According to Theorem \ref{thm:Kodama}   it is thus possible to transform \eqref{eq:Fu} into \eqref{eq:KDVto7}   with two consecutive near-to-identity transformations of the form \eqref{eq:NormalForm5} and \eqref{eq:NormalForm7} respectively. To remove the term $h^6 \mathcal{C}_7(U,h) \lambda_7 \langle (U_x)^3\rangle = h^6 C_7 \lambda_7 \langle (U_x)^3\rangle$ in  \eqref{eq:KDVto7}, 
it suffices to apply one more time-dependent change of variables
$$U \mapsto U + h^6 \widetilde G_6(U,t) := U-h^{6} C_7 \lambda_7 \int_0^t \langle U_x^3(s) \rangle \,  \mathrm{d}s \, ,$$ 
which cancels this last term in \eqref{eq:KDVto7}, yielding precisely \eqref{eq:NormalisedThesis41}. 

The quantity that determines whether the term $R(U)$ in \eqref{eq:NormalisedThesis41} and \eqref{eq:KDVto7} can be chosen equal to zero, is the constant $r$ defined in \eqref{eq:KodamaRCorrect}. 
   Substituting  \eqref{eq:ABC} in \eqref{eq:KodamaRCorrect} gives 
$$r = - \frac{7560}{\alpha^3} \left( 14 \alpha^3-27 \alpha \beta +12 \gamma  \right)\, . $$
This finishes the proof of Theorem \ref{th:Main}.
\end{proof}
\begin{remark}\label{rmk:constantsFPU}
Explicit expressions for the constants of motion $\mathcal{C}_1(U,h), \ldots, \mathcal{C}_7(U,h)$ in Theorem \ref{th:Main} can be obtained by combining \eqref{eq:veryexplicit} with  \eqref{eq:EquationsForAT},  \eqref{eq:EquationsForRs} and \eqref{eq:ABC}. In the setting of Theorem \ref{th:Main} we have
\begin{equation*}
	\widetilde{A}_4\;=\;-130+\frac{180 \beta}{\alpha^2} \, , \qquad \widetilde{A}_5 \;=\;-20 \, , \qquad \widetilde{A}_6\;=\;10 \, ,
\end{equation*}
so that
\eqref{eq:veryexplicitfirst} becomes
 \begin{equation}\label{eq:CsFPUfirst}
	\begin{split}
		&\mathcal{C}_1(U,h) = 1+h^4\left(({\textstyle \frac{180 \beta}{\alpha^2}-130) } \langle U^2 \rangle-20 \langle U \rangle^2\right), \,
		\mathcal{C}_3(U,h) = \frac{1}{24} \left(1+10 \, h^2 \langle U \rangle  \right), \, \mathcal{C}_5(U,h) = \frac{1}{1920}  \,   .
	\end{split}
\end{equation}
By combining \eqref{eq:EquationsForRs}  and \eqref{eq:ABC}   we obtain
\begin{equation*}
	\begin{split}
		&\lambda_1\;=\;28 \, , \qquad \lambda_2 \;=\; -854+ \frac{1260 \beta}{\alpha^2} \, , \qquad \lambda_3\;=\;84 \, , \\
		&\lambda_5 \;=\;420\left(16-\frac{63 \beta}{\alpha^2}+\frac{54 \beta^2}{\alpha^4} \right) \, , \qquad \lambda_6\;=\;28\left(49-\frac{90 \beta}{\alpha^2}\right) \, , \qquad \lambda_7\;=\;-427+\frac{630 \beta}{\alpha^2} \, .
	\end{split}
\end{equation*}
Recall that the scalar $\lambda_4$ may be chosen freely, so let us choose $\lambda_4=0$. 
Then
\eqref{eq:veryexplicit} becomes
 \begin{equation}\label{eq:CsFPU}
	\begin{split}
		&\mathcal{C}_1(U,h) \;=\; 1+h^4\left(({\textstyle \frac{180 \beta}{\alpha^2}-130) } \langle U^2 \rangle-20 \langle U \rangle^2\right)+ \\
		& \hspace{1.75cm} + h^6\left({\textstyle 420(16-\frac{63 \beta}{\alpha^2}+\frac{54 \beta^2}{\alpha^4}) \langle U \rangle \langle U^2 \rangle+28(49-\frac{90 \beta}{\alpha^2}) \langle U \rangle^3}\right)\, , \\
		&\mathcal{C}_3(U,h)\;=\;\frac{1}{24} \left(1+10 \, h^2 \langle U \rangle + h^4\left({\textstyle (\frac{1260 \beta}{\alpha^2}-854)\langle U^2 \rangle+84 \langle U \rangle^2}\right)\right)\, , \\
		&\mathcal{C}_5(U,h)\;=\; \frac{1}{1920}\left(1+28 \, h^2 \langle U \rangle \right) \, ,\\
		&\mathcal{C}_7(U,h)=\frac{1}{322560}\, .
	\end{split}
\end{equation}
\end{remark}
\section{Proof of Theorem \ref{thm:Kodama}}\label{sec:KodamaProof}
In the proof of Theorem \ref{thm:Kodama} that we provide in this section, we shall  explicitly compute how coordinate changes of the form $U \mapsto U+h^2 G_2(U)+ \mathcal{O}(h^4)$ and $U\mapsto U + h^4 G_4(U) + \mathcal{O}(h^8)$, with $G_2(U)$ given by \eqref{eq:DefG2} and $G_4(U)$ by \eqref{eq:DefG4}, transform an evolution equation $U_t={\rm F}(U,h)$ of the form \eqref{eq:KodamaEq}. In particular, we shall compute when exactly it is possible to transform \eqref{eq:KodamaEq} into a member of the KdV hierarchy, to order $h^4$ and to order $h^6$, and which choices of $G_2(U)$ and $G_4(U)$ realise this transformation.

Our proof is divided in three steps: first  we transform equation \eqref{eq:KodamaEq} into a normal form to order $h^{4}$. Next, we compute how this first transformation affects the evolution equation to order $h^{6}$. And finally, we normalise this new equation  to order  $h^{6}$.
 
Before we give a more concise outline of this procedure, we recall the definition of the Lie bracket of two operators $f$ and $g$:
\begin{equation} \label{eq:bracketgeneral}
	[f,g](U)\; :=\;f'(U)g(U) - g'(U)f(U)\, .
\end{equation}
Here $f'(U)g(U)$ denotes the Gateaux derivative (or directional derivative) of an operator $f$ (evaluated at $U$) in the direction of $g(U)$. See also Remark \ref{rmk:gateaux}. More explicitly, 
$$f'(U)g(U) \; :=\; \left. \frac{d}{d\varepsilon}\right|_{\varepsilon =0} f(U+\varepsilon g(U))\, .$$
In this paper, $f$ and $g$ will always be operators from $C^{\infty}(\mathbb T, \mathbb R)$ to $C^{\infty}(\mathbb T, \mathbb R)$.  

\paragraph{Outline of the normal form procedure.}
Below we sketch the procedure by which we bring equation \eqref{eq:KodamaEq} in Theorem \ref{thm:Kodama} into normal form. For other settings in which normal form transformations are applied to the FPUT chain, we refer to \cite{BP06, Gall, HK, Bob}. Recall that equation \eqref{eq:KodamaEq} is of the form 
\begin{equation}
U_t = \mathrm{F}_1(U) + h^{2}\mathrm{F}_3(U) + h^{4} \mathrm{F}_5(U) + h^{6}\mathrm{F}_7(U) + \mathcal{O}(h^{8})\, ,
\end{equation}
in which 
\begin{equation} \label{eq:Fs}
\begin{split}
    \mathrm{F}_1(U) &= C_1U_x \, , \\
    \mathrm{F}_3(U) & = C_3 (U_{3x}+6U U_x)\, , \\
    \mathrm{F}_5(U) & = C_5\left(U_{5x} + A_1 U_x U_{2x} + A_2 U U_{3x} + A_3U^2U_x + A_4\langle U^2\rangle U_x \right)\, , \\
    \mathrm{F}_7(U) & = C_7 \big(U_{7x} + B_1U_{2x}U_{3x} + B_2 U_xU_{4x} + B_3 UU_{5x} + B_4 (U_x)^3 + B_5 UU_xU_{2x} + B_6U^2U_{3x}  \\
& + B_7 U^3U_x +\langle U \rangle \big(B_8 U_{5x} +B_9 U_x U_{2x}+B_{10} UU_{3x} +B_{11} U^2 U_x\big) \\
& +\langle U^2 \rangle (B_{12} U_{3x}+B_{13} U U_x) + \langle U \rangle^2 (B_{14} U_{3x}+ B_{15} U U_x) \\
&  +(B_{16} \langle U^3 \rangle + B_{17} \langle (U_x)^2 \rangle  + B_{18} \langle U \rangle \langle U^2 \rangle + B_{19} \langle U \rangle^3) U_x + B_{20} \langle (U_x)^3 \rangle \big)\, .
\end{split}
\end{equation}

\noindent
We will transform this equation by a normal form procedure that consists of two separate transformation steps.  First we make a   transformation of the form $U\mapsto e^{h^2 G_2}(U)  := U+h^2G_2(U) + \frac{h^4}{2}G_2'(U)G_2(U)+ \mathcal{O}(h^6)$ where $G_2$ is chosen in such a way that $[G_2, \mathrm{F}_1]=0$. This transforms the vector field ${\rm F}={\rm F}(U)$ into  $$e^{h^2[G_2,\cdot]}({\rm F}) :=1+h^2[G_2,{\rm F}]+ \frac{1}{2} h^4 [ G_2, [G_2, {\rm F}]] + \frac{1}{6} h^6 [G_2, [ G_2, [G_2, { \rm F}]]] + \mathcal O (h^8)\, .$$ 
Expanding  ${\rm F}$ in powers of $h$ in this equation, we  obtain the expansion 
\begin{equation}\label{eq:NormalFormTransfGen}
\begin{split}
    U_t  = & \mathrm{F}_1(U) + h^{2}\mathrm{F}_3(U) + \\ + & 
    h^{4} \big\{  \underbrace{  \mathrm{F}_5(U)+ [G_2, \mathrm{F}_3](U)  }_{=: {\rm N}_5(U)} \big\} +   h^{6}\big\{  \mathrm{F}_7(U) + \underbrace{  [G_2,\mathrm{F}_5](U) + \frac{1}{2}[G_2, [G_2, \mathrm{F}_3]](U)}_{=\frac{1}{2}[G_2, {\rm F}_5+{\rm N}_5](U) =: {\rm R}_6(U)}\big\}  + \mathcal{O}(h^{8}) \, ,
    \end{split}
    \end{equation}
   for the transformed evolution equation. Note that the term $h^2 {\rm F}_3(U)$ is unaffected because $[G_2, {\rm F}_1]=0$. We now want 
  to choose $G_2$ in such a way that ${\rm N}_5 = \mathrm{F}_5+[G_2, \mathrm{F}_3]$ is in the KdV hierarchy, see Remark \ref{rmk:hierarchy}. It turns out that this can always be arranged. 
    
    The next step is to make a further coordinate change 
     $U\mapsto e^{h^4 G_4}(U)  = U+h^4G_4(U) + \mathcal{O}(h^8)$, again choosing $G_4$ so that $[G_4, \mathrm{F}_1]=0$. This   transforms our evolution equation further into 
    \begin{equation}\label{eq:NormalFormTransfGen2}
    U_t = \mathrm{F}_1(U) + h^{2}\mathrm{F}_3(U) + h^{4} {\rm N}_5(U) + h^{6}\big\{ \underbrace{   \mathrm{F}_7(U) + \frac{1}{2}[G_2, \mathrm{F}_5 + {\rm N}_5](U) + [G_4, \mathrm{F}_3](U)}_{= {\rm F}_7(U) + {\rm R}_6(U) + [G_4, {\rm F}_3](U)=:{\rm N}_7(U)} \big\}  + \mathcal{O}(h^{8}) \, .
    \end{equation}
    The goal is to choose $G_4$ so that ${\rm N}_7=\mathrm{F}_7 + \frac{1}{2}[G_2, \mathrm{F}_5 + {\rm N}_5] + [G_4, \mathrm{F}_3]$ is in the KdV hierarchy as well. It turns out that this can only be arranged (within the class of transformations that we consider) if  a certain relation among the coefficients of ${\rm F_5}$ and ${\rm F}_7$ is satisfied.
    
\paragraph{Normalisation at lowest order.} 
Inspired by Lemma 5.1 in \cite{HirKod} we choose $G_2$
of the form \eqref{eq:DefG2}. We have that $[G_2, {\rm F_1}] = C_1 [G_2, \partial_x] = 0$ because $G_2=G_2(U)$ does not explicitly depend on $x$. 

We also remark that  the authors of \cite{HirKod} use  a simpler class of transformations of the form $(C_3/C_5) G_2(U) = a_1 U_{2x} + a_2 U^2 + a_3 U_x U_{-x}$. This is not sufficient for us, on the one hand because equation \eqref{eq:KodamaEq} that we try to bring into normal form, is more general than the equations considered in \cite{HirKod}.  On the other hand, we like to make sure that  $\langle G_2(U)\rangle = 0$, so   that $e^{h^2G_2}$ maps the space of zero-average functions into itself.   
To compute the transformed vector field ${\rm N}_5(U)$, we use the bracket relations in Table \ref{tab:Table1}.
\begin{table}[]
    \centering
$$\begin{array}{|l|l|}
\hline
X &  [X, U_{3x} + 6 UU_x,]\\ \hline \hline 
U_{2x}   & 12 U_xU_{2x}  \\ \hline
U^2-\langle U^2 \rangle   & -6 U_xU_{2x} - 6U^2U_x +  6 \langle U^2\rangle U_x   \\ \hline
U_x (U-\langle U\rangle)_{-x}+\langle U^2\rangle - \langle U\rangle^2  & -3 U_xU_{2x}-3 UU_{3x}  - 3U^2U_x - 9  \langle U^2\rangle U_x + 6 \langle U\rangle^2U_x +6\langle U\rangle UU_x + 3 \langle U\rangle U_{3x}  \\ \hline
\langle U\rangle (U - \langle U \rangle)  & -6 \langle U \rangle U U_x +6 \langle U \rangle^2 U_x  \\ 
\hline
 \end{array}
$$
    \caption{Lie brackets determining the first normalisation step.}
    \label{tab:Table1}
\end{table}
\begin{remark}
The brackets in Table \ref{tab:Table1} can all be computed by hand, using formula \eqref{eq:bracketgeneral}. To illustrate,
\begin{equation*}
\begin{split}
     \left[ U^2 - \langle U^2\rangle, U_{3x} + 6 U U_x \right] \; = \; &  \!
     \left. \frac{d}{d\varepsilon}\right|_{\varepsilon=0} \big\{  
    \left( U + \varepsilon \left(U_{3x}+6U U_x \right)\right)^2 - \langle \left( U + \varepsilon \left(U_{3x}+6U U_x\right) \right)^2 \rangle  \\ & 
     - \left( U + \varepsilon \left( U^2 - \langle U^2\rangle \right)\right)_{3x} - 6\left( U + \varepsilon \left( U^2 - \langle U^2\rangle \right)\right) \left( U + \varepsilon \left( U^2 - \langle U^2\rangle \right)\right)_x \big\}  \\
     \; = \; & 2 U (U_{3x} + 6 U U_x) - \langle 2 U (U_{3x} + 6 U U_x) \rangle  \\ 
     & - \left( U^2 - \langle U^2\rangle\right)_{3x} - 6 U \left( U^2 - \langle U^2\rangle \right)_x - 6 \left( U^2 - \langle U^2\rangle \right) U_x\, .
    \end{split}
\end{equation*}

Now observe that
$\langle UU_{3x} \rangle = - \langle U_xU_{2x} \rangle = - \langle (\frac{1}{2} (U_x)^2)_x \rangle = 0$ (using integration by parts and the fundamental theorem of calculus), that $\langle U^2U_{x} \rangle  = \langle (\frac{1}{3} U^3)_x \rangle = 0$ (again by the fundamental theorem of calculus), and that 
$\langle U^2\rangle_{3x} = \langle U^2\rangle_{x} = 0$ (because the average is not a function of $x$).
 Using these identities, 
as well as the chain rule to rewrite the terms $(U^2)_{3x}$ and $U(U^2)_x$, we can simplify our expression for the bracket to $-6 U_xU_{2x} - 6U^2U_x +  6 \langle U^2\rangle U_x$.   
The other brackets are computed in  an analogous fashion.
\end{remark}

\noindent Using \eqref{eq:DefG2}, \eqref{eq:Fs} and the table above, we compute that 
\begin{equation}\label{eq:TransformedFirstOrder}
\begin{split}
 [G_2, \mathrm{F}_3](U)   \;=\;  & [G_2, C_3(U_{3x}+6UU_x)]  \\
  \;=\; & C_5\left\{ (12 a_1 -6a_2-3a_3 )U_xU_{2x}  - 3a_3 UU_{3x} - (6a_2+3a_3) U^2U_x + \right. \\
& \left.  (6a_3 - 6 a_4)\langle U \rangle UU_x +  3a_3 \langle U \rangle U_{3x} +(6a_2-9a_3)\langle U^2\rangle U_x + (6a_3+6a_4)\langle U\rangle^2 U_x \right\} \, .
\end{split}
\end{equation}
Recall that we want to make our lowest order normal form a member of the KdV hierarchy, of the form

\begin{equation}
\label{eq:wantKdV5}
\begin{split}
    N_5(U) & =\mathrm{F}_5(U)+[G_2, \mathrm{F}_3](U) = \\ & C_5(U_{5x}+20U_xU_{2x} + 10 U U_{3x}+ 30 U^2 U_x + \widetilde{A_4} \langle U^2\rangle U_x +\widetilde{A_5} \langle U\rangle^2 U_x + \widetilde{A_6} \langle U \rangle (U_{3x} + 6 UU_x) ) \, .
    \end{split}
\end{equation}
 
This can be arranged if we can solve the  system of linear equations
\[
\left\{
\begin{array}{l}   A_1 + 12 a_1 -6a_2-3a_3 = 20 \, , \\
 A_2 -3a_3 = 10\, ,\\
A_3- 6a_2 - 3a_3 = 30\, ,\\
6a_3 - 6a_4 = 18 a_3\, . 
\end{array} \right.
\]
for the coefficients $a_1,\dots,a_4$ that define $G_2$. It turns out that this system admits the unique solution 
\begin{equation}\label{eq:CoefficientsA}
\begin{split}
    a_1 &= \frac{1}{12}(A_3 - A_1 - 10) \, , \\
    a_2 &= \frac{1}{6}(A_3 - A_2 - 20) \, , \\
    a_3 &= \frac{1}{3}(A_2 - 10) \, , \\
    a_4 &= \frac{2}{3}(10-A_2)\, .
\end{split}
\end{equation}
Inserting \eqref{eq:CoefficientsA}  back into \eqref{eq:TransformedFirstOrder} and comparing with \eqref{eq:wantKdV5}, we in fact obtain
\[
\begin{split}
\widetilde{A}_4 &= A_4 + 6a_2-9a_3 = A_3+ A_4 -4A_2 + 10 \, , \\
\widetilde{A}_5 &= 6a_3+6a_4 = 20-2A_2 \, ,\\
\widetilde{A}_6 &= 3 a_3  = A_2-10\, .
\end{split}
\]
This coincides with \eqref{eq:EquationsForAT}.  Equations \eqref{eq:veryexplicitfirst} follow from \eqref{eq:wantKdV5} combined with \eqref{eq:EquationsForAT}. 
This completes the proof of part {\it i)} of Theorem \ref{thm:Kodama}. 

\paragraph{A new higher order term.}
The normal form transformation generated by $h^2G_2(U)$ simplifies our evolution equation  to order $h^4$, but it adds  new additional terms at order $h^{6}$. From \eqref{eq:NormalFormTransfGen} we see that these new terms are  
\[
 {\rm R}_6(U)\;=\; \frac{1}{2}[G_2, \mathrm{F}_5+{\rm N}_5](U) \,  .
\]
Here $G_2$ is as determined as in the previous paragraph, and we in fact have
\[
\begin{split}
  \mathrm{F}_5(U)+N_5(U) = & C_5\left \{ 2 U_{5x}+(A_1+20)U_xU_{2x} + (A_2+10) U U_{3x}+ (A_3+30) U^2 U_x + \right. \\ & \left. (A_3+2A_4-4A_2+10) \langle U^2\rangle U_x +(20-2A_2) \langle U\rangle^2 U_x + (A_2-10) \langle U \rangle (U_{3x} + 6 UU_x) \right\} \,.
    \end{split}
\]
To compute ${\rm R}_6(U)$ explicitly, we use  the bracket relations listed in Table \ref{tab:Table2}.
\begin{table}[]
    \centering
\[
\begin{array}{|l|l|l|}
\hline
X & Y & [X,Y]\\ \hline \hline 
U_{xx} &  U_{5x} & 0 \\
 \hline
  " &   U_{x}U_{2x} & 2   U_{2x}U_{3x}\\
  \hline
  " &   UU_{3x} & 2  U_{x}U_{4x} \\
   \hline
  " &  U^2U_{x} &  2(U_x)^3+4 UU_xU_{2x} \\ \hline
  " & \langle U^2\rangle U_x & 2\langle (U_x)^2\rangle U_x 
  \\ \hline
  " & \langle U\rangle^2 U_x &  0  \\
  \hline
   " & \langle U\rangle (U_{3x} + 6 U U_x) &  12 \langle U\rangle U_xU_{2x}  \\
  \hline
U^2-\langle U^2 \rangle & U_{5x} & -20 U_{2x}U_{3x} - 10 U_xU_{4x} \\
 \hline
  " &  U_{x}U_{2x} & -  2 (U_x)^3-2UU_xU_{xx}+\langle (U_x)^3\rangle \\
  \hline
  " &   UU_{3x} & -6UU_xU_{2x}-U^2U_{3x} -2\langle (U_x)^3\rangle +\langle U^2\rangle U_{3x}\\ \hline
  " & U^2U_{x} &-2U^3U_x  +2\langle U^2\rangle UU_x \\
  \hline
   " & \langle U^2\rangle U_x &  -2\langle U^3\rangle U_x + 2 \langle U \rangle \langle U^2 \rangle U_x\\
   \hline
     " & \langle U\rangle^2 U_x &  0  \\
  \hline
   " & \langle U\rangle (U_{3x} + 6 U U_x) &   -6\langle U \rangle U_xU_{2x} -6\langle U \rangle U^2U_x + 6 \langle U \rangle \langle U^2\rangle U_x     \\
  \hline
   U_x(U-\langle U\rangle)_{-x} +\langle U^2\rangle - \langle U\rangle^2 & U_{5x} & -15U_{2x}U_{3x}-10U_xU_{4x}-5UU_{5x} +5 \langle U \rangle U_{5x}\\
 \hline
  " &  U_{x}U_{2x} & -\frac{1}{2} U_x^3 - 3 UU_xU_{2x} -\frac{1}{2}\langle U_x^2\rangle U_x -\langle U_x^3\rangle + 3\langle U\rangle U_xU_{2x} \\
  \hline
  " &  UU_{3x} & -\frac{1}{2}(U_x)^3-3UU_xU_{2x}-3U^2U_{3x} +\frac{3}{2}\langle (U_x)^2\rangle U_x + \\
  & & 2 \langle (U_x)^3\rangle -\langle U^2\rangle U_{3x} +\langle U\rangle^2 U_{3x} + 3\langle U \rangle UU_{3x}   \\
   \hline
  " &   U^2U_{x} & -\frac{2}{3}U^3U_x -\frac{1}{3}\langle U^3\rangle U_x -2\langle U^2\rangle UU_x + 2\langle U\rangle^2 UU_x + \langle U\rangle U^2U_x\\ \hline
   " & \langle U^2\rangle U_x & \langle U^3\rangle U_x - 3 \langle U \rangle \langle U^2\rangle U_x +  2 \langle U\rangle^3 U_x  \\
  \hline
     " & \langle U\rangle^2 U_x &   0    \\
  \hline
   " & \langle U\rangle (U_{3x} + 6 U U_x) &  - 3 \langle U\rangle U_x U_{2x} - 3 \langle U\rangle U U_{3x}  - 3 \langle U\rangle U^2 U_{x}  \\
   & &   + 3\langle U\rangle^2 U_{3x}  + 6\langle U\rangle^2 UU_{x} + 6 \langle U\rangle^3 U_{x} - 9 \langle U\rangle\langle U^2\rangle U_{x}\\
    \hline
   \langle U\rangle (U-\langle U \rangle) & U_{5x} & 0 \\
 \hline
  " &  U_xU_{2x} & -\langle U \rangle U_x U_{2x} \\
  \hline
  " & U U_{3x} & \langle U \rangle^2 U_{3x} -\langle U \rangle U U_{3x}\\
   \hline
  " & U^2 U_x &  2\langle U\rangle^2 UU_x- 2\langle U\rangle U^2U_x\\ \hline
   " & \langle U^2\rangle U_x & 2 \langle U \rangle^3 U_x - 2 \langle U \rangle \langle U^2\rangle U_x\\
  \hline
     " & \langle U\rangle^2 U_x &  0  \\
  \hline
   " & \langle U\rangle (U_{3x} + 6 U U_x) &  6\langle U\rangle^3 U_x - 6 \langle U\rangle^2UU_x     \\
   \hline
\end{array}
\]
 \caption{Lie brackets determining the new higher order term. }
    \label{tab:Table2}
\end{table}
Using Table \ref{tab:Table2}, it is straightforward to compute that 

\begin{equation}\label{eq:ThirdOrderBeforeNormalisation}
\begin{split}
{\rm R}_6(U)\;=\;
\frac{C_5^2}{C_3}
\Big( & \widetilde{B}_1 U_{2x}U_{3x} + \widetilde{B}_2 U_xU_{4x} + \widetilde{B}_3 UU_{5x} + \widetilde{B}_4 (U_x)^3 + \widetilde{B}_5 UU_xU_{2x} + \widetilde{B}_6U^2U_{3x} + \widetilde{B}_7 U^3U_x + \\
& \langle U \rangle \left( \widetilde{B}_{8} U_{5x} + \widetilde{B}_{9}  U_xU_{2x} + \widetilde{B}_{10}  UU_{3x}  +  \widetilde{B}_{11}  U^2U_{x} \right) + \\
&  \langle U^2 \rangle \left( \widetilde{B}_{12} U_{3x} +  \widetilde{B}_{13} U U_x \right)  + \langle U\rangle^2 \left( \widetilde{B}_{14}  U_{3x}  + \widetilde{B}_{15} U U_x \right) +\\
&\left(
\widetilde{B}_{16} \langle U^3 \rangle + 
\widetilde{B}_{17}  \langle (U_x)^2 \rangle  +  
\widetilde{B}_{18} \langle U\rangle \langle U^2\rangle + \widetilde{B}_{19} \langle U \rangle^3  \right) 
U_x  + 
\widetilde{B}_{20} \langle (U_x)^3\rangle  \Big) \, ,
\end{split}
\end{equation} 
in which 
\begin{equation} \label{eq:tildeB}
 \begin{array}{l}
 
\widetilde{B}_1 =  a_1(A_1 + 20)- 20a_2-15a_3 \, ,  \\  
 \widetilde{B}_2 = a_1(A_2+10) -10a_2-10a_3 \, , \\  
 \widetilde{B}_3 =  -5a_3\, , \\  
 \widetilde{B}_4 =    a_1(A_3+30) -a_2(A_1+20)-(a_3/4)(A_1+A_2 + 30)\, , \\  
 \widetilde{B}_5 =   2a_1(A_3+30)-a_2(A_1+3A_2+50)-(3a_3/2)(A_1+A_2+30)\, , \\  
 \widetilde{B}_6 =  -(a_2+3a_3)(A_2+10)/2 \, , \\  
 \widetilde{B}_7 =   -(a_2+a_3/3)(A_3+30) \, , \\  
 \widetilde{B}_8 = 5a_3 \, ,
  \\  
 \widetilde{B}_9 = (6a_1-3a_2)(A_2-10) +3a_3(A_1-A_2+30)/2 - a_4(A_1+20)/2 \, ,
  \\  
 \widetilde{B}_{10} = 30a_3 -a_4(A_2+10)/2\, ,
  \\  
 \widetilde{B}_{11} = (6a_2+3a_3)(10-A_2)/2+(a_3 -2a_4)(A_3+30)/2\, ,
  \\  
 \widetilde{B}_{12} = (a_2-a_3)(A_2+10)/2\, ,
  \\  
 \widetilde{B}_{13} = (a_2-a_3)(A_3+30)\, ,
  \\  
 \widetilde{B}_{14} =  a_3(2A_2-10)+a_4(A_2+10)/2\, ,
  \\  
 \widetilde{B}_{15} =  (a_3+a_4)(A_3+30) +3(a_3-a_4)(A_2-10)\, ,
  \\  
 \widetilde{B}_{16} =  (a_3/2-a_2)(A_3+2A_4-4A_2+10) -a_3(A_3+30)/6 \, ,
  \\  
 \widetilde{B}_{17} = a_1(A_3+2A_4-4A_2+10) + a_3(3A_2-A_1+10)/4\, ,
  \\  
 \widetilde{B}_{18} =  (a_2- \frac{3}{2}a_3-a_4)(A_3+2A_4-4A_2+10) + (3a_2-\frac{9}{2}a_3)(A_2-10)\, ,
  \\  
 \widetilde{B}_{19} = (a_3+a_4)(A_3+2A_4-A_2-20)\, , \\  
 \widetilde{B}_{20} = (a_2-a_3)(A_1-2A_2)/2  \, .
\end{array}
\end{equation}
Here the coefficients $a_1$, $a_2$, $a_3$, $a_4$ are  as defined in \eqref{eq:CoefficientsA}. Note that $\widetilde{B}_4 - \frac{1}{2} \widetilde{B}_5 + \widetilde{B}_6 + \widetilde{B}_{20} = 0$ because ${\rm R}_6(U)$ has zero average.
To summarise, we now have that 
$$\mathrm{F}_7(U) + {\rm R}_6(U) = C_7\left( U_{7x} + \left(B_1+\frac{C_5^2}{C_3C_7}\widetilde{B}_1\right) U_{2x}U_{3x} + \left(B_2+\frac{C_5^2}{C_3C_7}\widetilde{B}_2\right) U_{x}U_{4x}  + \ldots \right) $$
is the new term of order $h^6$, after the first normal form transformation.

\paragraph{Normalisation at second order.}
In this final paragraph we perform a second change of variables $U\mapsto e^{h^2 G_4}(U) = U + h^4 G_4(U) + \mathcal O(h^8)$ to transform our evolution equation into an as simple as possible   form at order $h^{6}$.
In \eqref{eq:NormalFormTransfGen2} we see that in this way we add the term $[G_4, {\rm F_3}]$ to the order $h^6$ part of the equation. Recall that  the transformation generator $G_4(U)$ has the specific form \eqref{eq:DefG4}. 

To see what exactly the additional term $[G_4, {\rm F_3}]$ will look like, we  present the relevant  Lie brackets in Table \ref{tab:Table3}.
\begin{table}[]
    \centering
\[
\begin{array}{|l|l|}
\hline
X & [X, U_{3x}+6UU_x] \\
\hline \hline
U_{4x} & 60 U_{2x}U_{3x} + 24U_xU_{4x}  \\
\hline

(U_x)^2- \langle (U_x)^2\rangle & -6U_{2x}U_{3x} + 6 (U_x)^3 -6\langle (U_x)^3\rangle + 6 \langle (U_x)^2\rangle U_x\\ 
\hline
 UU_{2x}+ \langle (U_{x})^2\rangle & -3U_{2x}U_{3x} -3U_xU_{4x} + 12UU_xU_{2x} +6 \langle (U_{x})^3\rangle -6\langle (U_x)^2\rangle U_x \\
\hline
U^3-\langle U^3\rangle & -6(U_x)^3-18UU_xU_{2x} - 6U^3U_x-3 \langle (U_x)^3\rangle +6\langle U^3\rangle U_x\\
\hline
 U_x(U^2-\langle U^2\rangle)_{-x} +\langle U^3\rangle - \langle U\rangle \langle U^2\rangle &  -3(U_x)^3 -6UU_xU_{2x} -3U^2U_{3x} - 2U^3U_x +3\langle U^2\rangle U_{3x} +3 \langle (U_x)^2\rangle U_x  \\ 
&  + 6 \langle U^2\rangle UU_x - 10\langle U^3\rangle U_x  +3\langle (U_x)^3\rangle   + 6 \langle U\rangle \langle U^2\rangle U_x
 \\ \hline
  (U_{3x}+6UU_x)(U- \langle U\rangle)_{-x}  &    -3U_x U_{4x}- 3 UU_{5x} - 18(U_x)^3- 72UU_xU_{2x} - 21U^2U_{3x} - 18U^3U_x    \\
 +3\langle U^3\rangle - \langle (U_x)^2\rangle -  3 \langle U^2\rangle \langle U\rangle  &    - 3\langle U^2\rangle U_{3x} +6\langle (U_x)^2\rangle U_x+ 3\langle (U_x)^3\rangle -18\langle U^3\rangle U_x -18\langle U^2\rangle UU_x 
\\
  &   + 18 \langle U \rangle \langle U^2\rangle U_x + 3\langle U\rangle (U_{5x} + 18U_xU_{2x}+8UU_{3x} + 12U^2U_x )
\\ \hline
 \langle U \rangle U_{xx} &  12 \langle U \rangle U_xU_{2x}  \\
 \hline
 \langle U \rangle ( U^2-\langle U^2\rangle ) &  -6 \langle U \rangle U_xU_{2x} - 6 \langle U \rangle U^2U_x +  6 \langle U \rangle  \langle U^2\rangle U_x \\
 \hline
 \langle U \rangle (U_x (U-\langle U\rangle)_{-x}+\langle U^2\rangle - \langle U\rangle^2 ) & -3 \langle U\rangle U_xU_{2x}-3 \langle U\rangle UU_{3x}  - 3\langle U\rangle U^2U_x - 9 \langle U\rangle \langle U^2\rangle U_x \\ &  + 6 \langle U\rangle^3U_x +6\langle U\rangle^2 UU_x + 3 \langle U\rangle^2 U_{3x} \\ \hline
  \langle U \rangle^2(U-\langle U\rangle) & -6 \langle U \rangle^2 U U_x +6 \langle U \rangle^3 U_x  \\
 \hline
 \langle U^2 \rangle (U-\langle U\rangle) & 6\langle U^2\rangle \langle U\rangle U_x-6 \langle U^2 \rangle UU_x\\ \hline
\langle (U_x)^2 \rangle &  -6 \langle (U_x)^2 \rangle U_x +6 \langle (U_x)^3 \rangle  \\ \hline
\langle U^3 \rangle &  -6 \langle U^3 \rangle U_x +3 \langle (U_x)^3 \rangle \\ \hline
\end{array} 
\]
    \caption{Lie brackets determining the second normalisation step.}
    \label{tab:Table3}
\end{table}

From now on we shall write 
$$\mathbf{B}=(B_1,\dots,B_{20}) \, , \,  \mathbf{\widetilde{B}}=(\widetilde{B}_1,\dots,\widetilde{B}_{20})\, , \,  \mathbf{\widetilde{\widetilde{B}}}=(\widetilde{\widetilde{B}}_1, \dots, \widetilde{\widetilde{B}}_{20})\, , \, \mbox{and} \ \mathbf{b}=(b_1,\dots, b_{13})\, . $$ 
(Again, we denote vectors with bold characters.) Here, the $B_i$ are as in \eqref{eq:Fs}, the $\widetilde B_j$ as in \eqref{eq:tildeB}, and the $b_k$ as in \eqref{eq:DefG4}.   
From the bracket relations in Table \ref{tab:Table3}, it   follows that the new order $h^6$ part of our evolution equation reads
\[
\begin{split}
N_7(U)& :=\mathrm{F}_7(U) + {\rm R}_6(U) + [G_4, \mathrm{F}_3](U) 
= \\
& C_7\left( U_{7x} + \widetilde{\widetilde{B}}_1 U_{2x} U_{3x} +\widetilde{ \widetilde{B}}_2 U_xU_{4x} +\widetilde{ \widetilde{B}}_3 UU_{5x} +\widetilde{ \widetilde{B}}_4 (U_x)^3 +\widetilde{ \widetilde{B}}_5 UU_xU_{2x} +\widetilde{ \widetilde{B}}_6U^2U_{3x} +\widetilde{ \widetilde{B}}_7 U^3U_x + \ldots \right)
\end{split}
\]
-- we use the notation as in \eqref{eq:Fs} -- in which 
\[
\widetilde{\widetilde{\mathbf{B}}} = \mathbf{B} + \frac{C_5^2}{C_3C_7}\widetilde{\mathbf{B}} + M \mathbf{b}\, ,
\]
and $M$ is the $20\times 13$ matrix 
\[
M=\left( \begin{array}{ccccccccccccc}
60 & -6 & -3 & 0& 0& 0 & 0 & 0 & 0 & 0 & 0 & 0 & 0\\
24 & 0 & -3 & 0& 0& -3 & 0 & 0 & 0 & 0 & 0& 0 & 0\\ 
0 & 0 & 0 & 0 & 0 & -3 & 0 & 0 & 0 & 0 & 0 & 0 & 0\\
0 & 6 & 0 & -6 & -3 & -18& 0 & 0 & 0 & 0 & 0 & 0 & 0\\
0 & 0 & 12 & -18 & -6 & -72& 0 & 0 & 0 & 0 & 0 & 0 & 0\\
0 & 0& 0& 0& -3& -21& 0 & 0 & 0 & 0 & 0 & 0 & 0\\
0 & 0 & 0 & -6 & -2 & -18& 0 & 0 & 0 & 0 & 0& 0 & 0\\
0 & 0 & 0 & 0 & 0 & 3 & 0 & 0 & 0 & 0 & 0& 0 & 0\\
0 & 0 & 0 & 0 & 0 & 54 & 12 & -6 & -3 & 0 & 0& 0 & 0\\
0 & 0 & 0 & 0 & 0 & 24 & 0 & 0 & -3 & 0 & 0& 0 & 0\\
0 & 0 & 0 & 0 & 0 & 36 & 0 & -6 & -3 & 0 & 0& 0 & 0\\
0 & 0 & 0 & 0 & 3 & -3 & 0 & 0 & 0 & 0 & 0& 0 & 0\\
0 & 0 & 0 & 0 & 6 & -18 & 0 & 0 & 0 & 0 & -6& 0 & 0\\
0 & 0 & 0 & 0 & 0 & 0 & 0 & 0 & 3 & 0 & 0& 0 & 0\\
0 & 0 & 0 & 0 & 0 & 0 & 0 & 0 & 6 & -6 & 0& 0 & 0\\
0 & 0 & 0 & 6 & -10 & -18 & 0 & 0 & 0 & 0 & 0& 0 & -6\\
0 & 6 & -6 & 0 & 3 & 6 & 0 & 0 & 0 & 0 & 0& -6 & 0\\
0 & 0 & 0 & 0 & 6 & 18 & 0 & 6 & -9 & 0 & 6& 0 & 0\\
0 & 0 & 0 & 0 & 0 & 0 & 0 & 0 & 6 & 6 & 0& 0 & 0\\
0 & -6 & 6 & -3 & 3 & 3 & 0 & 0 & 0 & 0 & 0 & 6 & 3
\end{array}
\right) \, .
\]
Recall that we would like $N_7(U)$ to lie in the KdV hierarchy. More precisely, we want it to be of the form
\begin{equation}\label{eq:wantKdV7}
\begin{split}
& C_7\left(U_{7x} + 70U_{2x}U_{3x} + \ldots \right) \\  
& +  \lambda_1 \langle U\rangle (U_{5x} + 20 U_{x}U_{2x} + \ldots ) +  
\lambda_2 \langle U^2 \rangle (U_{3x} + 6 UU_x)  +  \lambda_3 \langle U \rangle^2 (U_{3x} + 6 UU_x)   \\
& + \lambda_4 (\langle (U_x)^2\rangle -2 \langle U^3\rangle  ) U_x + \lambda_5 \langle U\rangle \langle U^2\rangle U_x +\lambda_6 \langle U\rangle^3 U_x +\lambda_7 \langle (U_x)^3 \rangle\, ,
\end{split} 
\end{equation}
for certain scalars $\lambda_1, \ldots, \lambda_7$.  This can be arranged precisely when the  system of algebraic equations
\begin{equation}\label{eq:solvability}
	 \mathbf{\widetilde{\widetilde{B}}} = \mathbf{B} + \frac{C_5^2}{C_3C_7}\widetilde{\mathbf{B}} + M \mathbf{b} \;=\; \mathbf{w} \, 
\end{equation}
can be solved for ${\bf b}$, where $\mathbf{w} = \mathbf{w}(\lambda)$ is defined as
\begin{equation}
	\mathbf{w}\;=\;(70, 42, 14, 70, 280, 70, 140, \lambda_1, 20 \lambda_1, 10 \lambda_1, 30 \lambda_1, \lambda_2, 6 \lambda_2, \lambda_3, 
 6 \lambda_3, -2 \lambda_4, \lambda_4, \lambda_5, \lambda_6, \lambda_7) \, .
\end{equation}
In fact, the  system \eqref{eq:solvability} admits a solution ${\bf b}$ if and only if 
$$ \mathbf{w} - \mathbf{\widetilde{\widetilde{B}}}  = \mathbf{w}-\frac{C_5^2}{C_3 C_7} \mathbf{\widetilde{B}}-\mathbf{B} \in \mathrm{ran} \, M\, .$$  
Now note that the orthogonal complement of  $\mathrm{ran} \, M$  (with respect to the Euclidean inner product in $\mathbb{R}^{20}$) is $(\mathrm{ran} \, M)^ \perp \;= \; \mathrm{span} \,\{\mathbf{v}_1,\mathbf{v}_2,\mathbf{v}_3,\mathbf{v}_4,\mathbf{v}_5,\mathbf{v}_6,\mathbf{v}_7\}$ with
\begin{equation}
	\begin{split}
		&\mathbf{v}_1\;=\;(0, 0, -14, 0, 0, 2, 0, 0, 0, 0, 0, 0, 0, 0, 0, 3, 6, 0, 0, 6) \, ,  \\
		&\mathbf{v}_2\;=\;(0, 0, 32, 0, 0, 0, 0, 0, 0, 4, 0, 0, 0, 0, 1, 0, 0, 0, 1, 0) \, ,\\
		&\mathbf{v}_3\;=\;(0, 0, -48, 0, 0, 4, 0, 0, 0, -4, 1, 0, 1, 0, 0, 0, 0, 1, 0, 0) \, , \\
		&\mathbf{v}_4\;=\;(0, 0, 8, 0, 0, 0, 0, 0, 0, 1, 0, 0, 0, 1, 0, 0, 0, 0, 0, 0) \, , \\
		&\mathbf{v}_5\;=\;(0, 0, -8, 0, 0, 1, 0, 0, 0, 0, 0, 1, 0, 0, 0, 0, 0, 0, 0, 0) \, , \\
		&\mathbf{v}_6\;=\;(0, 0, 1, 0, 0, 0, 0, 1, 0, 0, 0, 0, 0, 0, 0, 0, 0, 0, 0, 0) \, ,  \\
		&\mathbf{v}_7\;=\;(24, -60, 170, 24, -9, -8, 3, 0, 0, 0, 0, 0, 0, 0, 0, 0, 0, 0, 0, 0) \, .
	\end{split}
\end{equation}
Indeed, one may check that the ${\bf v}_i$ span the kernel of the map ${\bf v}\mapsto {\bf v}^TM$. It is thus sufficient to require that $ \mathbf{w} - \mathbf{\widetilde{\widetilde{B}}}$ is perpendicular to each of the vectors $\mathbf{v}_1, \dots, \mathbf{v}_7$, i.e., that
\begin{equation}\label{eq:scalarproducts}
	\left( \mathbf{B}+ \frac{C_5^2}{C_3 C_7} \mathbf{\widetilde{B}} - \mathbf{w}\right) \cdot \mathbf{v}_j  \;=\; 0\qquad \mbox{for}\ j=1,\dots, 7\, . 
\end{equation}
After substituting  \eqref{eq:CoefficientsA}  in \eqref{eq:tildeB} to obtain an expression for $\widetilde{\bf B}$ in terms of ${\bf A}$ and ${\bf B}$, equations \eqref{eq:scalarproducts} for $j=1,\dots,6$ become precisely equations \eqref{eq:EquationsForRs}. Equation \eqref{eq:scalarproducts} for $j=7$ is equivalent to the equation $r({\bf A},{\bf B},{\bf C})=0$, with $r$ as defined in \eqref{eq:KodamaRCorrect}. This last equation thus constitutes the (only) constraint on the parameters $A_1,\dots,B_{20}$ of the vector fields $\mathrm{F}_5$ and $\rm F_7$. Equations \eqref{eq:veryexplicit}  follow from \eqref{eq:wantKdV5} combined with \eqref{eq:EquationsForAT} and  \eqref{eq:wantKdV7} combined with  \eqref{eq:EquationsForRs}. 
This concludes the proof of part {\it ii)} of Theorem \ref{thm:Kodama}.

{\color{black}
\section*{Acknowledgements}
M.G. was partially supported by the MIUR-PRIN 2017 project MaQuMA cod. 2017ASFLJR, by GNFM (INdAM). M.G. and A.P. acknowledge the kind hospitality of Vrije Universiteit Amsterdam where part of this work was carried out. In turn, B.R. acknowledges the hospitality of the University of Padova. 
A.P. thanks D. Bambusi for many fruitful discussions.
}

\end{document}